\newtheorem{theorem}{Theorem}
\newtheorem{lemma}{Lemma}
\newtheorem{claim}{Claim}
\newtheorem{proposition}{Proposition}
\newtheorem*{proposition*}{Proposition}
\newcommand{\EE}{\mathbb E}
\newcommand{\PP}{\mathbb P}
\newcommand{\Aug}{\textnormal{\textsc{Aug}}}
\newcommand{\Adj}{\textnormal{\textsc{Adj}}}
\newcommand{\Occ}{\textnormal{\textsc{Occ}}}
\newcommand{\Rem}{\textnormal{\textsc{Rem}}}
\newcommand{\stablematching}{\textsc{SM}}
\newcommand{\greedycommit}{\textsc{Greedy-Commit}}
\newcommand{\opt}{OPT}
\newcommand{\OPT}{OPT}
\newcommand{\RoundAlgSucc}[1]{A_{\leq #1}}       %A^{succ}_{i}
\newcounter{todocounter}
\title{Decentralized Matching in a Probabilistic Environment}
\author{Mobin  Y. Jeloudar, Irene Lo, Tristan Pollner, Amin Saberi$^*$}
\begin{document}

\begin{abstract}

 We consider a model for repeated stochastic matching where compatibility is probabilistic, is realized the first time agents are matched, and persists in the future. Such a model has applications in the gig economy, kidney exchange, and mentorship matching.  %The goal of each agent is to maximize the number of times they are successfully matched, and the goal of the platform is to maximize total successful matches. 
  
We ask whether a \textit{decentralized} matching process  can approximate the optimal online algorithm. In particular, we consider a decentralized \textit{stable matching} process where agents match with the most compatible partner who does not prefer matching with someone else, 
and known compatible pairs continue matching in all future rounds. 
We  demonstrate that  the above process provides a 0.316-approximation to the optimal online algorithm for matching on general graphs. We also provide a $\nicefrac{1}{7}$-approximation for many-to-one bipartite matching, a $\nicefrac{1}{11}$-approximation for capacitated matching on general graphs, and a $\nicefrac{1}{2k}$-approximation for forming teams of up to $k$ agents. Our results rely on a novel coupling argument that decomposes the successful edges of the optimal online algorithm in terms of their round-by-round comparison with stable matching.
\end{abstract}

\begin{titlepage}

\maketitle

\end{titlepage}

\section{Introduction}

We consider a model with a finite set of agents who can be repeatedly matched in each of a finite number of rounds. Each pair of agents $(i,j)$ is compatible with a known probability $p_{\{i,j\}}$. When a pair is first matched, their compatibility is realized, and is successful with probability $p_{\{i,j\}}$ and unsuccessful with probability $1-p_{\{i,j\}}$. This compatibility persists through all future rounds.
This model captures learning dynamics in  platforms  that match workers with repeated tasks (dog-walking, babysitting, private chefs), mentorship programs, and kidney exchange. In such models, agents may only have an estimate of their compatibility with potential partners and typically learn their compatibility with others through being matched.

% The goal of a matching platform is to adaptively select a matching of agents in each round to maximize some \my{global objective?}, while the goal of each agent is to be successfully matched as many times as possible.

% \my{These priors could arise in a platform that allows search.}
% \st{Matching platforms face a number of different objectives, many of which are aligned with agent welfare.} 

The goal of a matching platform is to maximize a \emph{weighted sum of the size of the matching in each round}. For example, a platform may want to maximize the total number of successful matches, such as the total number of dog-dogwalker days or babysitting events. Or, it may only be interested in the number of successfully matched pairs by the end of the matching process. In mentorship and kidney exchange programs agents delays are costly, so matches made earlier in the process may be more valuable than matches made later.
In contrast, the goal of each agent is to selfishly maximize the number of rounds in which they are matched.

Consider two matching processes. The optimal centralized matching algorithm, \opt,  solves an NP-hard Bayesian optimization problem to maximize the expected reward, by  prescribing matchings adaptively across rounds.   In contrast, in the Stable Matching or \stablematching\ process, the pairs that were successfully matched to each other in previous rounds remain matched and the rest of the matching is formed by pairing the agents with the highest success probability in a greedy fashion. Ties are broken arbitrarily. We show that \stablematching\ provides a constant-factor approximation of \opt.

\begin{theorem}\label{thrm:2.31}
The expected reward of \stablematching\ is at least a 0.316-approximation of the expected reward of the optimal online algorithm.
\end{theorem}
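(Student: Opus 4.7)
The plan is to couple \stablematching\ and \opt\ on the same realization of edge compatibilities, and then decompose the expected reward of \opt\ edge by edge according to how that edge interacts with the state of \stablematching. Let $w_t$ denote the weight of round $t$. For each edge $e = \{i,j\}$, let $t^*_O(e)$ be the (random) first round in which \opt\ matches $e$; $e$ contributes $p_e \cdot \sum_{t \ge t^*_O(e)} w_t$ to $\mathbb{E}[\opt]$, conditional on the rest of \opt's trajectory. I would aim to charge a $(1 - 1/e)/2 = 0.316$ fraction of each such contribution against a disjoint portion of the expected reward accrued by \stablematching.

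The core of the argument is a round-by-round dichotomy driven by the stability property. Consider an arbitrary round $t$ and an edge $(i,j) \in O_t$ (\opt's matching in round $t$), and look at \stablematching's state entering round $t$. Either (a) at least one of $i, j$ is already committed to a known-compatible partner in \stablematching, in which case \stablematching\ is already collecting reward on that agent in every round $\ge t$ and this covers the marginal contribution of $(i,j)$ from round $t$ onward; or (b) both $i$ and $j$ are still ``exploring'' in \stablematching, in which case by the greedy stable-matching rule \stablematching\ proposes in round $t$ partners for $i$ (and $j$) of probability at least $p_{\{i,j\}}$, so \stablematching's round-$t$ success probability on the neighborhood of $(i,j)$ dominates \opt's. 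Combining (a) and (b) yields a per-round inequality charging a fraction of the round-$t$ weight of \opt\ against either an ``already-committed'' \stablematching\ match or a round-$t$ \stablematching\ exploration.

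To sum this up across rounds, I would track, for each agent $i$, the random first round $\tau_A(i)$ in which \stablematching\ commits $i$ to a known-compatible partner. The reward from agent $i$ in \stablematching\ is $\sum_{t \ge \tau_A(i)} w_t / 2$ (splitting edge reward between endpoints), and the ``unused'' rounds of \stablematching\ on $i$ are exactly rounds $1, \ldots, \tau_A(i) - 1$, which are the rounds against which case (b) charges. The per-round inequality then becomes a statement that compares, in each round, \opt's reward to a sum of \stablematching's ``committed'' and ``exploring'' contributions, and telescoping produces the overall approximation guarantee. Obtaining the constant $(1 - 1/e)/2$ specifically would come from a worst-case balancing: the factor $\tfrac{1}{2}$ is from the two-endpoint nature of matching (only one endpoint can be charged per agent), and the $1 - 1/e$ is the standard loss from one round of stable matching relative to the best adaptive exploration schedule \opt\ could use.

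The main obstacle I expect is case (b): making precise the claim that ``\stablematching\ in round $t$ explores pairs of probability at least as high as \opt's.'' \opt\ is adaptive and can correlate its exploration with past realizations, so to get a tight bound one has to argue this pointwise on the coupled sample path, not just in expectation. I would try to formalize this through an exchange-style argument: for each $(i,j) \in O_t$ falling into case (b), exhibit an edge $(i, j')$ or $(i', j)$ in $A_t$ with $p_{\{i,j'\}} \ge p_{\{i,j\}}$ guaranteed by the stability of $A_t$ on the subgraph of currently uncommitted agents, and then bound the probability that \stablematching\ has not yet gotten a committed partner for $i$ by the $1 - 1/e$ style estimate. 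Wrapping this into a global charging scheme while avoiding double-counting (an edge of \opt\ could try to charge both its endpoints, and a single \stablematching\ exploration could be charged by multiple \opt\ edges across rounds) is the step I would expect to require the most care.
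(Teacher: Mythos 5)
Your high-level decomposition is essentially the paper's: OPT's round-$t$ successful edges split into those adjacent to \stablematching's already-successful edges (your case (a), the paper's $\Adj$, handled by a charging argument that loses a factor of $2$ from the two endpoints of each \stablematching\ edge) and those vertex-disjoint from them (your case (b), the paper's $\Aug$, handled by the greedy/stability property, losing another factor of $2$ because greedy is a $\nicefrac{1}{2}$-approximation to max-weight matching). You even guess the right constant, since $(1-1/e)/2 = \left(2+\tfrac{2}{e-1}\right)^{-1}$. But there is a genuine gap: the dichotomy as you describe it only yields $\nicefrac{1}{4}$. Bounding $\EE[\Adj]$ by $2\,\EE[S_{\le t}]$ and $\EE[\Aug]$ by $2\,\EE[S_{\le t}]$ and adding gives $\EE[O_{\le t}] \le 4\,\EE[S_{\le t}]$, full stop. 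The ingredient that moves from $\nicefrac{1}{4}$ to $0.316$ is a \emph{refined} domination inequality: an edge that OPT first discovers in round $i$ but that ends up adjacent to a \stablematching\ success first obtained in a later round $j$ could still have augmented $S_{\le j-1}$, so it must also be counted inside the greedy-domination constraint for round $j$. Cross-classifying the adjacent edges by both OPT's discovery round $i$ and \stablematching's commitment round $j$ produces a family of coupled constraints, and the constant $e$ emerges from the optimal dual solution of the resulting factor-revealing LP --- not from a ``probability that agent $i$ is not yet committed'' estimate, which is where your sketch tries to locate it. Your case (a)/(b) split treats the two charges as disjoint and has no mechanism producing this interaction.

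Separately, you correctly flag the conditioning obstacle in case (b) but do not resolve it, and its resolution is the technical heart of the proof. The fix is to partition the sample space by the \emph{joint} history $(h,h')$ that the sample graph induces on \stablematching\ and on OPT through round $i-1$. Conditioned on this joint event, the edges OPT queries in round $i$ that lie outside both histories are still independent with their original probabilities $p_e$, so their expected successful count is exactly $\sum_{e\in N_0} p_e$; moreover this set $N_0$ forms a matching disjoint from $S_{\le i-1}$ that \stablematching\ could itself have selected, so greedy's guarantee applies conditioned only on $h$, giving $2\,\EE[S_i \mid \mathcal{G}_h] \ge \EE[\Aug_i \mid \mathcal{G}_{h,h'}]$ uniformly over $h'$. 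Without this two-level conditioning, the pointwise exchange you propose does not go through: an edge in OPT's round-$t$ matching may have a realization already known to OPT (conditional success probability $0$ or $1$), while \stablematching's greedy choice is being evaluated against unconditioned probabilities, so comparing the two ``probabilities'' edge by edge on a coupled sample path is not a well-posed inequality.
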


% The above theorem holds for any objective that is a weighted sum of the size of the matching in each round and for any number of rounds. This suggests that if the platform focuses on building its search features and allows demand and supply to find each other in a decentralized way, the outcome will approximate the best centralized matching that the platform could prescribe. 

% \il{ XX This is useful because opt is NP hard XX} 

%{\bf Amin: the notion of expected reward is not defined. It is not clear that we are talking about the objective function of the pltform}

In \Cref{sec:model}, we provide a more detailed explanation of the \stablematching\ process and its relationship with stable matching. Informally, if agents form preference lists over all other agents based on probability of compatibility, then \stablematching\ forms a stable matching with respect to these preferences.

Stable matching has a number of attractive properties. While the optimal centralized matching \opt\ is  NP-hard to compute, stable matching can be computed in polynomial time and reached in a decentralized manner. Additionally, also in contrast with \opt, stable matching is incentive compatible: it does not match agents against their will by forcing unmatched agents with high compatibility to match with low-compatibility partners, or break up compatible matches in future rounds. Stable matching is also attractive in settings such as kidney exchange or mentor matching where time is valuable, as it does not sacrifice present potential matches for future payoffs. 

%2.31-approximation to the optimal adaptive algorithm. 

\smallskip

Our analysis implies a better approximation factor for another previously-known matching algorithm, known as \greedycommit \cite{chen2009approximating}. Similar to \stablematching,  \greedycommit\ commits to known compatible pairs by matching them in all future rounds, but unlike SM, it selects a maximum weight matching between the remaining agents to maximize the immediate expected reward. %\footnote{Note that \greedycommit\ is not the same as the \textsc{Greedy} algorithm which greedily selects in each round the matching with the highest expected value; see Appendix~\ref{appendix:greedy-vs-greedy-commit} for details.}
We can interpret \greedycommit\ as an off-the-shelf centralized matching service that can be computed by the platform in polynomial time if it knows the compatibility probabilities but does not want to solve the optimal online repeated matching problem. 
We show that \greedycommit\  provides a better constant-factor approximation to \opt.
\begin{theorem}\label{thrm:greedy-commit}
The expected reward of \greedycommit\ is at least a 0.43-approximation of the expected reward of the optimal online algorithm.
\end{theorem}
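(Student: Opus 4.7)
The plan is to reuse the coupling-and-decomposition machinery developed for \Cref{thrm:2.31}, observing that the single step in which properties of \stablematching\ are invoked can be strengthened when \greedycommit\ is substituted as the benchmark. The abstract tells us the proof of \Cref{thrm:2.31} works by decomposing \opt's successful edges round-by-round against \stablematching; the key numerical input there is a single-round inequality bounding the expected weight of \stablematching's greedy matching on uncommitted agents against the expected weight of \opt's round-$t$ matching restricted to those same agents. The factor $0.316$ absorbs the loss of the greedy-versus-optimal-matching step. \greedycommit\ pays no such loss because it plays the maximum-weight matching on uncommitted agents.

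First, I would formalize both algorithms under a common template: in each round $t$, given committed pool $C_t$, commit to all pairs in $C_t$ and play some matching $N_t$ on the uncommitted vertices. The round-$t$ reward is $|C_t| + \sum_{e \in N_t} p_e$, and newly successful probed edges are added to $C_{t+1}$. The coupling from \Cref{thrm:2.31} reduces the analysis to: against \opt's round-$t$ matching $M_t$, compare $\mathbb{E}[\sum_{e \in N_t} p_e]$ to $\mathbb{E}[\sum_{e \in M_t,\, e \subseteq U_t} p_e]$, where $U_t$ is the set of vertices uncommitted by the benchmark. For \greedycommit, the inequality $\mathbb{E}[\sum_{e \in N_t} p_e] \ge \mathbb{E}[\sum_{e \in M_t,\, e \subseteq U_t} p_e]$ holds directly by the definition of maximum-weight matching, since $M_t \cap (U_t \times U_t)$ is a feasible matching on $U_t$.

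Next, I would plug this sharper single-round inequality into the coupling-based charging of \opt's successful edges and re-solve the resulting factor-revealing optimization. I expect the improvement from $0.316$ to $0.43$ to come entirely from replacing the greedy-versus-max-weight loss factor with $1$ at the relevant position of the decomposition, while the other ingredients (tracking committed pools across rounds, handling blocking contributions from \greedycommit-committed edges that preclude \opt\ edges, and unrolling the geometric-sum contribution of long-lived committed edges) go through verbatim, only with different numerical constants.

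The main obstacle I anticipate is verifying that the constant $0.43$ is precisely what falls out, rather than a smaller or differently-shaped improvement. This should reduce to re-optimizing a one-parameter inequality whose optimum was tuned to $0.316$ in the \stablematching\ case; carefully redoing that scalar optimization under the stronger max-weight bound ought to produce the claimed ratio, and pinning down the exact worst-case instance (probably a small two-round construction) would then confirm that $0.43$ is the right constant for this proof approach.
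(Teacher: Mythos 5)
Your plan is essentially the paper's proof: the Charging Lemma carries over unchanged, and the only modification is to the (Refined) Domination Lemma, where the greedy $\nicefrac{1}{2}$-approximation step is replaced by the exact max-weight guarantee of \greedycommit, strengthening that family of constraints by a factor of $2$. The one mechanical point to note is that the final ``re-optimization'' is not a one-parameter scalar problem but the same factor-revealing LP with the tightened domination constraints, which the paper bounds by exhibiting an explicit dual solution whose value tends to $2 + \frac{2}{e^2-1}$ as $t\to\infty$, yielding $\left(2+\frac{2}{e^2-1}\right)^{-1} \ge 0.43$.
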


Finally, we consider settings where agents have the capacity to be bilaterally matched with multiple other agents, or grouped in teams. Many gig economy markets and matching settings are built on many-to-one bipartite markets, where agents on one side (e.g. workers) can be matched to multiple agents on the other side of the bipartition (e.g. tasks). In this setting \stablematching\ provides a $\nicefrac{1}{7}$-approximation to OPT. In the general matching problem where all agents can be matched to multiple other agents, \stablematching\ provides a $\nicefrac{1}{11}$-approximation to OPT, and in the teams problem where we form teams of up to $k$ agents, \stablematching\ provides a $\nicefrac{1}{2k}$-approximation to OPT. %\todo{add approx factors}
%\ilcomment{Might be nice to include a theorem for each of these}
 
Together, our results contribute to the discussion of the benefits of investing in optimal centralized matching. We show if the platform allows myopic, self-interested agents to determine a decentralized matching, the outcome attains a constant factor of the expected reward of the optimal online matching. If the platform offers a centralized matching service, in addition to determining match probabilities it must also force agents into matches that are not incentive-compatible, and may even need to solve an NP-hard problem.

\subsection{Technical Ingredients}
\smallskip

A key technical contribution of this paper is to compare the decentralized stable matching process to the optimal \emph{online} algorithm for centralized repeated matching. In the study of online Bayesian optimization, an extremely common benchmark is the optimal \emph{offline} algorithm, which knows at the outset which edges are successful and unsuccessful.
%\my{\emph{Optimal offline} is an algorithm who knows about the successful graph edges in advanced. Assume nature samples the graph, such that each edge will be successful with its corresponding probability. But the offline algorithm is aware of the results on every edge, with no need to do the probing. (Problem will be equivalent to finding the expected size of maximum matching in the input graph.)} 
In our setting, optimal offline is not an interesting benchmark, because for a certain family of inputs any online algorithm achieves an arbitrarily poor approximation to the optimal offline algorithm 
% (see Appendix~\ref{appendix:optoffline}). 
(see Appendix A). 

%While the optimal online algorithm can in theory be computed by solving an exponential-size dynamic program,few approaches are known for analyzing the performance of an algorithm with respect to the optimal online algorithm. Approximation guarantees in the stochastic matching literature are often provided with respect to an optimal offline algorithm, or by demonstrating that the adaptivity gap is small (see related work in Section~\ref{subsec:related-literature}). In addition, the worst-case ratio between the performance of the optimum online and optimum offline algorithms is unbounded in the setting we consider. 

%\amincomment{Amin: I commented the above paragraph. It was a good pitch for the previous version of the paper. But now it just makes the message complicated.}

We  compare the stable matching \stablematching\ with the optimal online algorithm OPT by \emph{coupling} edges selected by OPT with those selected by \stablematching. 
Since the two algorithms uncover \textit{different information} as they progress, care must be taken in the coupling process not to condition analysis of the reward achieved by one algorithm on information that is acquired by the other.

%this enables us to compare the expected number of successful edges in both sets. 
\paragraph{Domination Lemma. }
A pivotal piece of our analysis is a \emph{Domination Lemma}, which uses the greedy structure of the stable matching to bound the reward generated by a subset of edges selected by OPT by twice the reward generated by \stablematching. The subset of edges is carefully chosen so that \stablematching\ approximately improves upon OPT, despite the fact that OPT is able to use information from prior rounds in a more sophisticated manner. %expected number of successful edges from a subset of those selected by OPT by those selected by Greedy-Commit. 
Specifically, note  that \stablematching\ greedily selects max-weight edges between agents who have not yet been matched. Now fix a round $t$, and consider the edges selected by OPT in round $t$ that do not share an endpoint with successful edges previously selected (i.e. selected in rounds $1,2,\ldots,t-1$) by \stablematching. Such edges can also be selected by \stablematching\ in round $t$, and so the greediness of \stablematching\ implies that the expected number of such edges that are successful is bounded by the twice the expected number of successful edges newly selected by \stablematching\ in round $t$. While the intuition behind the Domination Lemma is straightforward, care needs to be taken to ensure that expectations are taken appropriately, since OPT and \stablematching\ explore different parts of the sample space and so their performance is conditional on different histories.  %\todonum{Mention that there is a nuance with conditioning on histories? Or ignore that here?}
\paragraph{Charging Lemma. } We bound the remaining edges using a \emph{Charging Lemma} that charges edges selected by OPT to adjacent edges selected by \stablematching. %By definition, if an edge is selected by OPT is not covered by the Domination Lemma, then it shares an endpoint with some edge previously selected by Greedy-Commit. The number of such edges selected by OPT can be bounded by charging them each to an adjacent edge selected by Greedy-Commit. 
For matching on general graphs, the bounds provided by the Charging Lemma and a refined version of the Domination Lemma define a factor-revealing LP that yields the 0.316-approximation. In the other settings, generalized versions of the Charging Lemma and Domination Lemma show the constant-factor approximations.

\smallskip

Our analysis also provides some intuition for why stable matching, which utilizes a myopic, decentralized matching process in each round, and which limits its use of information by committing to successful matches, is nonetheless able to achieve a constant-factor approximation to the optimal online algorithm, which can select optimal matchings in each round, and can also adaptively make use of information across rounds. In the uncapacitated matching setting, the Domination Lemma shows that in any given round, agents who have not previously been matched can match themselves via stable matching at least half as well as \opt, and the Charging Lemma observes that in any given round, agents who are matched in that round via stable matching are collectively matched at least half as well compared to \opt. Therefore stable matching cannot do much worse than \opt, even though it makes use of information in a much less sophisticated manner.

\subsection{Relationship with Prior Work}\label{subsec:related-literature}

There is a large body of work on \emph{centralized} and \emph{decentralized} matchings in online and two-sided platforms. Most of this literature focuses on designing optimal information structures in centralized matchings, or optimal procedures for centralized matching by the platform. Our question also motivates a variant of \emph{repeated stochastic matching} that is related to existing literature on the \emph{query-commit problem} and \emph{stochastic matching with rewards}. Interestingly, our analysis is sufficiently general to also handle previously studied centralized algorithms in the query-commit setting, and consequently we are able to provide improved bounds on prior algorithms.

\vspace{-5pt}

\paragraph{Optimal matching in matching platforms} There is also a substantial literature on finding optimal or near-optimal matching policies in two-sided platforms with long-lived agents, such as matching in ridesharing \cite{gurvich2015dynamic, banerjee2015pricing,  banerjee2016pricing,hu2020dynamic},  volunteer platforms \cite{manshadi2020online}  and blood donation \cite{mcelfresh2020matching}. We focus on such platforms where match compatibility is difficult to determine and can be learned exactly only through matching. A well-known example of such a setting is kidney exchange, which has been studied from a repeated matching perspective \cite{ashlagi2019matching, akbarpour2020thickness} and a failure-aware perspective \cite{dickerson2013failure}. 
\vspace{-5pt}
\paragraph{Online and Stochastic Matching.} The online bipartite matching problem introduced in \cite{karp1990optimal}, where vertices on one side of a bipartite graph arrive online, is foundational to the literature on online matching problems. Many variations have been studied, including the \emph{adwords problem} \cite{mehta2007adwords, goel2008online}, \emph{matching with stochastic rewards}  \cite{mehta2012online,mehta2014online,goyal2019online,huang2020online} (where edge realizations are stochastic) and the random-order and i.i.d. online matching problems  \cite{devanur2009adwords, feldman2009online, korula2009algorithms, karande2011online, mahdian2011online, manshadi2012online,kesselheim2013optimal,jaillet2014online}. While almost all of these papers provide competitive guarantees compared to the optimum offline algorithm, we  provide guarantees compared to the optimum online algorithm in a multi-round environment. 

\vspace{-5pt}
\paragraph{Query-Commit.} Motivated in part by the application of kidney exchange, there is a large body of literature studying a variant of stochastic matching known as the query-commit problem \cite{chen2009approximating, adamczyk2011improved, bansal2012lp, molinaro2011query, goel2012matching,gupta2017adaptivity,gamlath2019beating}. Here, edges have fixed realization probabilities and can be queried in sequence, with the constraint that successful edges queried must be accepted, and accepted edges must form a matching. The objective is to maximize the size of the resulting matching. \cite{chen2009approximating} introduced this setting and proved that the greedy algorithm (which simply queries feasible edges in decreasing order of probability) is a $\nicefrac{1}{4}$-approximation to the optimal online algorithm even when vertices have \emph{patience parameters}, i.e. limits on how many incident edges can be queried. The authors additionally showed that the optimal online algorithm is NP-hard to compute. \cite{adamczyk2011improved} improved on this result by showing that the greedy algorithm is in fact a $\nicefrac{1}{2}$-approximation (with patience parameters), and \cite{bansal2012lp} provided constant-factor approximations on weighted graphs through an LP-based approach. Later work provided constant-factor approximations in query-commit settings for feasibility constraints significantly generalizing the matching constraint (e.g., \cite{gupta2013stochastic}, \cite{gupta2016algorithms}, \cite{adamczyk2016submodular}).

A less common assumption in the query-commit literature is the ability to query a matching in each round, instead of a single edge. \cite{chen2009approximating} define this setting, propose the \greedycommit\ algorithm, and prove that it achieves a $\nicefrac{1}{4}$-approximation to the optimal online algorithm which is forced to commit. Their analysis can furthermore be extended to a \emph{capacitated matching} setting, where in each round we can select a matching that must respect integral capacity constraints on each vertex. \cite{bansal2012lp} also prove a $\nicefrac{1}{20}$-approximation in the weighted setting, where the constraint is that a matching of size at most $C$ can be queried in each round (for any parameter $C$). 

The main differences between this line of work and our contribution are two-fold. Most importantly, while \greedycommit\ operates \emph{globally} in each round, the decentralized matching process we analyze cannot replicate such a centralized greedy approach. 
%Secondly, we do not compare to the performance of algorithms that are forced to commit to edges that have materialized; we compare to the optimal online algorithm not beholden to this restriction. 
Secondly, we compare to algorithms that are not forced to commit to successful edges.
(If OPT-Commit denotes the optimal online algorithm that is forced to forever match any successful edges it finds, simple examples show that OPT is a more powerful algorithm than OPT-Commit; see Appendix C
% \Cref{appendix:greedy-vs-greedy-commit}
.) Because we compare to the optimal online algorithm that is not restricted to committing, our coupling argument is substantively different to, and more delicate than, the one in \cite{chen2009approximating}. Despite this challenge, our analysis of the decentralized setting still improves upon \cite{chen2009approximating}; we show greedy achieves a 0.316-approximation against a non-committing (more powerful) OPT, tightening the previous analysis of a $\nicefrac{1}{4}$-approximation to committing OPT.

\vspace{-5pt}
\paragraph{Learning by Matching.} Prior work has also explored several models of learning in matching settings. Many of these papers consider a multi-armed bandit setting where rewards are stochastic and redrawn i.i.d. from an unknown distribution, with algorithms of an explore-exploit nature \cite{Channel_Allocations,johari2016matching,Bandits_Matching_Markets}.  Efficient algorithms that approximate the optimal online algorithm were also studied by \cite{price_of_information} in a stochastic setting with a price for querying each edge, relying on a Gittins Index characterization for the optimal online algorithm \cite{gittins1974dynamic, weitzman1979optimal}. 

\section{Stable Matching in General Graphs}\label{sec:model} 
We begin with a setting where each agent is matched with at most one other agent in each round. This setting captures bipartite matching problems such as matching mentors and mentees, as well as matching in general graphs such as matching peer mentors, roommates and kidney exchange.

There is a set $V$ of vertices, representing agents. For every pair $i, j \in V$ of agents with $i \neq j$, there is an edge between $i$ and $j$ with probability $p_{\{i,j\}}$ independent from other edges, representing the compatibility of the agents. The set of agents and probabilities are known at the outset. We will find it useful to view the entire graph as being generated randomly from the outset: before the first round nature samples the graph $G(V,E)$ with probability $\prod_{e \in E} p_e \prod_{e \notin  E} (1-p_e).$ The platform and agents do not have direct access to $G$. Instead, in rounds $1 \leq t \leq T$, they can determine a matching $M_t$ between the vertices in $V$ and observe whether the edges in $M_t$ are in $E$ or not.

Let $X_e$ be the Bernoulli random variable indicating whether $e$ is in $E$; we say an edge is {\em successful} if $X_e = 1$.
Each agent's goal is to maximize the expected number of rounds in which they are successfully matched. 
The platform's reward in each round is equal to the number of successful edges selected in that round. %(i.e., $\sum_{e \in M_t} X_e$).  
Given weights $\omega_1,\ldots,\omega_T$, the platform's goal is to maximize the weighted sum $\sum_{t=1}^T \omega_t \sum_{e\in M_t} X_e$ of the rewards collected in all $T$ rounds. This can capture if the platform's goal is to maximize the size of the matching in the last round ($\omega_t = \mathbbm{1}_{t=T}$), maximize the total number of successful matches in each round ($\omega_t = 1$ for all $t$), or weighted toward favoring matches in earlier rounds (e.g. $\omega_t=\delta^t$ for $\delta \in (0,1)$).

The (deterministic) optimal online algorithm for maximizing the total reward can be derived by an exponential-size dynamic program \cite{bertsekas1995dynamic}. However, exact optimization is NP-hard (as we discuss later), and for large numbers of agents it is hence infeasible for a matching platform to compute this algorithm. With this in mind, we focus on simple  matching processes that are computable in polynomial time, and give approximations to the optimal online algorithm.

We assume
that each agent $i$ knows his or her compatibility probabilities $p_{\{i,j\}}$ with every other agent $j$. 
This could be enabled, for example, by search functionality in the platform that allows agents to view information about other agents.
 Fix a given round and an agent $i$, and assume that through prior matches agent $i$ now has \emph{updated} their priors with other agents to $\{\hat{p}_{\{i,j\}}\}$ (i.e., all unsuccessful matches are updated to 0 and all successful matches are updated to 1). Agent $i$ wants to be matched as soon as possible, so in this round would like to match with the remaining agent $\text{argmax}_{j\neq i} \hat{p}_{\{i,j\}}$ who is most likely to be compatible with $i$. Hence, agent $i$ can form a preference list over all other agents in this round by sorting their current compatibility probabilities $\{\hat{p}_{\{i,j\}}\}$.
 
 In the \stablematching\ process, we assume that in each round agents choose a matching with no \emph{blocking pair} under these preferences (two agents who are not matched to each other but  prefer each other to their matched partners). We additionally assume that once a pair of agents is successfully matched they will continue matching with each other in future rounds. It is straightforward to see that the \emph{decentralized stable matching}, where there are no blocking pairs, can be formalized as follows.

 \begin{tcolorbox}
 
 \textbf{Stable Matching (\stablematching)}
 
Initialize the set of successful edges $A \leftarrow \emptyset$. Let $S \leftarrow V.$ 

For rounds $1 \leq t \leq T$:  

\begin{itemize} 
    \item While there are agents in $S$ who are compatible with positive probability, find the most compatible pair $f$, match $i$ with $j$, and remove $i,j$ from $S$. Let $M_t$ be the matching determined once no agents in $S$ are compatible with positive probability.
    \item Output $M_t \cup A$  as the selected matching for round $t$.
    \item If an edge $e$ in $M_t$ is successful add it to $A$. Otherwise add both its endpoints back to $S$ and set $p_e \leftarrow 0$.
\end{itemize}
\end{tcolorbox}

In other words, the stable matching can be determined by greedily selecting pairs of agents who are most likely to be compatible, matching them, and committing to matching them in future rounds if their edge is successful.
The stable matching is incentive-compatible \emph{within-rounds}, in the sense that if in round $t$ some agent $i$ prefers matching with another agent $j$ to their match in $M_t$ then $j$ is matched to a preferred agent. The stable matching is also incentive-compatible \emph{across-rounds}, in the sense that agents who are successfully matched in previous rounds prefer staying matched to matching again, and agents always prefer to be matched as soon as possible.

For additional clarity, consider when all compatibility probabilities are distinct and strictly smaller than 1. In this case, the stable matching in the first round is unique: the pair with the highest compatibility  match with each other (otherwise they would form a blocking pair), the pair with the highest probability among the remaining agents match, and so on. No matter the realizations, in all rounds there is a unique stable matching based on the updated preferences.
When some edge probabilities are the same the stable matching may not be unique, but our analysis applies no matter which stable matching is selected, as long as previously matched edges stay matched.

We compare the result of \stablematching\ with a platform that optimizes its matching process centrally. One downside to this approach is that determining this \emph{optimal online algorithm (\opt)} is NP-hard.
\begin{proposition}\label{lemma:np-hard}
Computing the optimal online algorithm is NP-hard.
\end{proposition}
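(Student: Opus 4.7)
The plan is to establish NP-hardness by a polynomial-time Karp reduction from a known NP-hard problem. The most natural source is the query-commit matching problem on general graphs, whose optimal adaptive policy was shown to be NP-hard to compute by \cite{chen2009approximating}. Our repeated matching setting is closely related to query-commit (both involve adaptive probing of stochastic edges over time), so a reduction that forces our model to simulate single-edge probing per round should succeed.

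First, I would fix the source problem. Given a query-commit instance, i.e., a graph $G=(V,E)$ with edge probabilities $\{p_e\}$ and a target $k$, the decision problem asks whether some adaptive querying policy (with commitment to successful edges) achieves expected committed matching size at least $k$. Second, I would construct a repeated matching instance on a graph $G' \supseteq G$ with $T$ rounds and weights $\{\omega_t\}$. The augmentation would introduce, for each $v\in V$ and each round $t$, a dedicated dummy vertex $v_t$ adjacent to $v$ by a probability-$1$ edge, and choose the round weights $\omega_t$ so that matching dummy edges yields very large but fixed reward. This gadgetry forces any reasonable OPT to devote each real vertex to its dummy neighbor in most rounds, leaving at most one ``real'' edge of $E$ to be probed per round and mirroring query-commit dynamics.

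Third, I would establish the value correspondence: the expected reward of OPT on $G'$ equals a known additive constant (the deterministic reward from the dummy edges) plus a quantity exactly equal to the optimal query-commit value on $G$. Combining this equivalence with the NP-hardness of query-commit then yields the proposition. A polynomial-time algorithm for OPT in our setting would solve query-commit via the gadget reduction.

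The main obstacle will be the gadget design and its correctness. On one hand, I must show that the dummy-vertex construction truly enforces one-probe-per-round behavior -- OPT should not be able to profitably probe multiple real edges simultaneously or reshuffle dummy assignments to extract extra reward. This requires a careful choice of $\{\omega_t\}$ so that the marginal benefit of breaking a dummy match to probe a second real edge is strictly dominated. On the other hand, there is a mismatch in commitment: OPT in our model is free not to commit to successful edges, whereas query-commit must. I would resolve this by arguing that under the constructed weights and gadget structure, re-matching a successful real edge is weakly optimal in every subsequent round, so committing is without loss and the reduction is value-preserving.
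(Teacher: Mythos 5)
There is a genuine gap here, on two levels. First, your source problem is not the one whose hardness the cited reference establishes: the NP-hardness in \cite{chen2009approximating} (and the one the paper invokes) is proved directly for the \emph{matching-per-round} model via a reduction from $k$-edge-colorability, not for the single-edge-probe query-commit problem. You would need an independent hardness proof for single-edge query-commit before your reduction could even begin, so the argument is not self-contained and arguably circular relative to what is actually known.

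Second, and more fatally, the gadget does not force the behavior you want. In this model the reward is $\sum_t \omega_t \sum_{e \in M_t} X_e$: every successful edge in round $t$ earns the \emph{same} weight $\omega_t$, so you cannot make dummy edges ``very large but fixed'' relative to real edges probed in the same round. With probability-$1$ dummy edges $(v, v_t)$ and real edges of probability $p_e \le 1$, matching $v$ to its dummy weakly dominates probing any real edge in every round, and learning that a real edge is successful confers no advantage (a known-successful real edge and a dummy edge both yield reward exactly $1$ per round for that vertex). Hence OPT on $G'$ simply matches every vertex to a dummy in every round, its value is $|V|\sum_t \omega_t$ independent of $G$, and the claimed value correspondence with the query-commit optimum collapses. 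The paper's proof avoids all of this: it reduces from $k$-edge-colorability by taking $k$ rounds, giving every edge probability $1/m^3$, and measuring the final-round matching; with such small probabilities the expected reward is, up to an $O(1/m^4)$ error, proportional to the number of \emph{distinct} edges ever queried, and all $m$ edges can be queried within $k$ rounds if and only if $G$ is $k$-edge-colorable, yielding a clean $1/m^3$ vs.\ $1/m^4$ separation. If you want to salvage a reduction in the spirit of yours, you would need to rework both the source problem and the incentive structure of the gadget.
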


This was originally proved by Chen et al. \cite{chen2009approximating};\footnote{\label{fn:chen}\cite{chen2009approximating} state the hardness result for a slightly different setting, where the optimal online algorithm is forced to commit to edges that are successful. Despite the fact that in our setting, \opt\ is not required to commit, the same proof holds.} for completeness we provide a full proof for our setting in Appendix D.
% \Cref{appendix:optnphard}.
This hardness result motivates the analysis of matching algorithms which can be computed efficiently. Our main result is the following.

\setcounter{theorem}{0}
\begin{theorem}
\label{thrm:2.31} 
The expected reward of stable matching is at least a 0.316-approximation of the expected reward of the optimal online algorithm.
\end{theorem}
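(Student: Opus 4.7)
The plan is to bound \opt's expected reward by an appropriate weighted combination of \stablematching's per-round new successful edges, via a coupling of the two algorithms on a common draw of the edge-realization variables $\{X_e\}$. For each round $t$, let $C^{SM}_{<t}$ denote the set of vertex endpoints of \stablematching's successful edges from rounds $1, \ldots, t-1$. I partition the edges of \opt's round-$t$ matching into Type A (both endpoints outside $C^{SM}_{<t}$) and Type B (at least one endpoint inside $C^{SM}_{<t}$). Type A edges live in the same subproblem on which \stablematching\ executes a greedy step, while Type B edges must be charged against \stablematching's previously committed matches. Let $s_t$, $a_t$, $b_t$ denote the expected counts of, respectively, new successful SM edges, successful Type A OPT edges, and successful Type B OPT edges in round $t$.

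First, I would prove a Domination Lemma asserting $a_t \leq 2 s_t$. Conditioning on \stablematching's history through round $t-1$, \stablematching\ executes greedy maximum-weight matching on the available vertices $V \setminus C^{SM}_{<t}$ with weights equal to the posterior success probabilities of unqueried edges. Because greedy matching attains the classical $1/2$ guarantee against any matching on the same graph, and \opt's Type A matching is a feasible matching on these vertices, the total probability mass chosen by \stablematching\ is at least half of \opt's Type A probability mass. Translating probability mass into expected successful edges requires that the coupling preserve the independence of unqueried edges from both algorithms' distinct histories; this is the subtle point. A refined version of the lemma that separates out edges both algorithms may have queried yields the tighter LP constraints needed for the final bound.

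Second, I would prove a Charging Lemma bounding $\sum_t \omega_t b_t$. Every successful Type B OPT edge in round $t$ shares an endpoint with a successful \stablematching\ edge from an earlier round, and I charge each such Type B edge to that adjacent SM edge. Since each committed SM edge has two endpoints and \opt's matching uses at most one edge through each endpoint, at most two Type B OPT edges per round can be charged to one SM commitment, yielding $\sum_t \omega_t b_t \leq 2 \sum_t \omega_t \sum_{s < t} s_s$. The coupling again requires care: the success probability of a Type B OPT edge depends on edges \stablematching\ may not have queried, and the charging must avoid double-counting these hidden realizations.

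Finally, using the persistence identity $\EE[\text{SM reward}] = \sum_s s_s \sum_{t \geq s} \omega_t$ together with the Domination and Charging bounds, the ratio $\EE[\text{OPT reward}] / \EE[\text{SM reward}]$ is upper-bounded by the optimum of a factor-revealing LP in the variables $\{s_t, a_t, b_t\}$, whose solution, using the refined Domination constraint made possible by the coupling, yields the $0.316$ bound. The main obstacle is the execution of the coupling: because \opt\ and \stablematching\ gather different information across rounds, naive conditioning on one algorithm's history would corrupt the probabilities governing the other, invalidating both the greedy matching argument and the charging argument. The technical heart of the proof is therefore a careful formulation of the joint probability space in which both inequalities hold simultaneously.
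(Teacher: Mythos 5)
Your high-level architecture (domination lemma for edges avoiding \stablematching's commitments, charging lemma for the rest, factor-revealing LP) matches the paper's, but your Domination Lemma is stated at the wrong granularity and your proposed proof of it does not go through. You claim $a_t \le 2 s_t$, where $a_t$ counts \emph{all} successful Type~A edges in \opt's round-$t$ matching and $s_t$ counts only \stablematching's round-$t$ \emph{new} successes. Your justification is that greedy matching on $V \setminus C^{SM}_{<t}$ captures half the probability mass of \opt's Type~A matching. But $a_t$ is not controlled by the probability mass $\sum_e p_e$ of that matching: \opt\ re-selects edges it first queried and verified in earlier rounds, and conditioned on \opt's history those edges succeed with probability $1$, not $p_e$. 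Formally, for such an edge, $\Pr[e \in M^{\opt}_t,\, X_e = 1]$ can equal $\Pr[e \in M^{\opt}_t]$ rather than $p_e \cdot \Pr[e \in M^{\opt}_t]$, because \opt\ keeps $e$ \emph{because} it succeeded. So the greedy $\nicefrac{1}{2}$-guarantee, which compares weights $\hat p_e$, bounds the wrong quantity. There is also a strong external signal that the lemma as stated is too good: combined with your (correct) Charging Lemma $b_t \le 2\sum_{s<t} s_s$, it would give $\EE[O_{\le t}] \le 2 s_t + 2\sum_{s<t} s_s = 2\,\EE[S_{\le t}]$ for every $t$, i.e., a clean $\nicefrac{1}{2}$-approximation matching the known upper bound — in which case no factor-revealing LP and no constant $0.316$ would be needed. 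The paper's resolution is to partition \opt's successful round-$t$ edges by the round $i$ in which \opt\ \emph{first queried} them, yielding sets $\Aug_i$, and to prove $\EE[\Aug_i] \le 2\,\EE[S_i]$ by conditioning on the \emph{pair} of histories $(h,h')$ induced on \stablematching\ and \opt\ through round $i-1$; only at the round of first query are the relevant edges unqueried by both algorithms, so their conditional success probabilities revert to the priors $p_e$ and the greedy comparison is valid against $S_i$, the \stablematching\ successes discovered in that same round.

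A second, independent gap: even with a correctly round-indexed Domination Lemma, your LP in the singly-indexed variables $\{s_t, a_t, b_t\}$ cannot produce $0.316$. The plain Domination and Charging Lemmas together only yield $\EE[O_{\le t}] \le 4\,\EE[S_{\le t}]$, a $\nicefrac{1}{4}$-approximation. The improvement to $\left(2 + \frac{2}{e-1}\right)^{-1} \ge 0.316$ requires the Refined Domination Lemma, which further splits the adjacent edges as $\Adj_i(S_j)$ according to the round $j$ in which the \emph{blocking} \stablematching\ edge was discovered, and asserts the joint constraint
\begin{equation*}
\EE[\Aug_i] + \sum_{q = j}^{t} \EE[\Adj_i(S_q)] \;\le\; 2\,\EE[S_j] \qquad \text{for all } i, j \le t,
\end{equation*}
exploiting the fact that edges blocked only by \stablematching\ successes from rounds $\ge j$ were still feasible augmentations of $S_{\le j-1}$. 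Your sketch gestures at "a refined version that separates out edges both algorithms may have queried," but that is not the refinement needed; the refinement is doubly indexed (by \opt's discovery round $i$ and \stablematching's discovery round $j$), and the resulting LP has variables $X_i, X_{i,j}, Y_j$ whose dual solution gives the limit $2 + \frac{2}{e-1}$. Without identifying this constraint your plan stalls at $\nicefrac{1}{4}$.
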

\setcounter{theorem}{2}

The proof of Theorem \ref{thrm:2.31} relies on coupling edges selected by \OPT\ with those selected by \stablematching. A subset of these edges are bounded using a \emph{Domination Lemma}, which uses the greediness of \stablematching\ to bound the expected reward generated by a subset of edges selected by \OPT\ by twice the expected reward generated by \stablematching. This coupling requires a delicate comparison of two algorithms that may have, through different histories, discovered very different pieces of information about the sample graph $G$. The remaining edges are bounded using a \emph{Charging Lemma} that charges edges selected by \OPT\ to adjacent edges selected by \stablematching. 
%By definition, if an edge is selected by OPT is not covered by the Domination Lemma, then it shares an endpoint with some edge previously selected by Greedy-Commit. The number of such edges selected by OPT can be bounded by charging them each to an adjacent edge selected by Greedy-Commit. 
The bounds provided by the Charging Lemma and the Domination Lemma define a factor-revealing LP that yields the 0.316-approximation. 
In \Cref{sec:upperbound}, we additionally show an upper bound of $0.5+\epsilon$ on the approximation ratio that \stablematching\ achieves to \OPT.

\subsection{Coupling Edges Selected by Stable Matching  and OPT} \label{sec:notation}

The main technical innovation in our paper is to couple the edges selected by OPT with the edges selected by \stablematching\ in a way that admits a constant-factor bound in expectation for each round \textit{despite the fact that \stablematching\ and OPT learn different information in each round}. To specify the subsets of edges that are coupled in the Charging and Domination Lemmas, we introduce some additional notation.

Throughout the paper, for a random set $S$ we will write $\EE[S]$ as a shorthand for $\EE[|S|]$. We use $\sqcup$ to denote a union of sets that are disjoint. 
%\my{we should mention this expectation is over all possible base graphs.} \tristancomment{Hmm, here though it truly is just a notation for the expectation of the size of a random set and we don't care what the expectation is over (although we should get into this later).}\my{ I see! But we should also make a not about $\EE[S_j]$ or any other set of edges; that their expectation is over all the base graphs. Or maybe we don't want to mention anything about base graphs until the reader gets to the proof of Lemmas 1-3?}\ilcomment{I think it's fine the way it is; commenting out so we can focus on other things}
 
For all $t \in [T]$, let $S_{\le t}$ denote the set of \textit{successful} edges selected by \stablematching\ in round $t$.\footnote{Note $S_{\le t}$ is a random set that is a deterministic function of the sample graph $G$.}
Also define $S_t$ to be the successful \emph{new} edges that \stablematching\ selects for the first time in round $t$. We hence have
\begin{equation}
S_{\le t} = S_1 \sqcup S_2 \sqcup \ldots \sqcup S_t.
\end{equation} Similarly, let $O_{\leq t}$ denote the set of \textit{successful} edges that \OPT\ selects in round $t$. 

Fix $t$. To bound the expected number of edges in $O_{\le t}$, we will partition these edges based on whether or not they can be added to $S_{\le t}$. In particular, we write
\begin{equation}
\label{eq:aug1}
O_{\le t} = \textsc{Aug} \sqcup \textsc{Adj}
\end{equation}
where $\textsc{Aug}$ denotes all the edges in $O_{\le t}$ that are vertex-disjoint from $S_{\le t}$ (and hence can augment this matching), and $\Adj$ denotes the remaining edges that share an endpoint with some edge in $S_{\le t}$.\footnote{As $S_{\le t}$ and $O_{\le t}$ are deterministic functions of the sample graph $G$, $\Aug$ and $\Adj$ are hence also random sets entirely determined by the sample graph.} It will also be useful to categorize all edges in $O_{\le t}$ according to the round in which OPT first selected them; hence, we will write
\begin{equation}
    \label{eq:aug2}
\Aug = \Aug_1 \sqcup \Aug_2 \sqcup \ldots \sqcup \Aug_t,
\end{equation}
where $\Aug_j$ denotes all edges in $\Aug$ that OPT first selected in round $j$. Similarly, we break up 
\begin{equation} 
\Adj = \Adj_1 \sqcup \Adj_2  \sqcup \ldots \sqcup \Adj_t
\end{equation} 
where $\Adj_j$ is all edges in $\Adj$ that OPT first selected in round $j$. 
Note that  $\Aug$ and $\Adj$ and their corresponding partitions are defined with respect to $S_{\le t}$ and $O_{\le t}$ and therefore they all depend on $t$. We have dropped the index $t$ from their notation to simplify the exposition.

In the rest of this section, we will use the Domination Lemma to couple edges in $S_i$  with edges in $\Aug_i$, and the Charging Lemma to couple edges in $S_t$ with edges in $\Adj_t$. This coupling provides bounds that define the factor-revealing LP.

\subsection{The Domination Lemma}

%We prove $\mathbb{E}[O_{\le t}] \le 3 \cdot \mathbb{E}[S_{\le t}]$ via two lemmas. 
Our main technical lemma is the Domination Lemma, which makes use of the greediness of \stablematching\ to bound its approximation to OPT. In particular, recall that $S_{\le i-1}$ is the set of \textit{successful} edges selected by \stablematching\ in round $i$. Then in each round $i \leq t$, \stablematching\ selects edges to add on to $S_{\le i-1}$ greedily. In this step, the Domination Lemma lower bounds its expected reward versus some of the edges selected by \opt.

\begin{lemma}[Domination Lemma]\label{lemma:dominationlemma} Fix a round $t$ and define $\Aug_i$ for $1 \leq i \leq t$ as in Equations (\ref{eq:aug1}) and (\ref{eq:aug2}). Then, for all $i \leq t$, $2 \cdot \EE[S_i] \ge \EE[\Aug_i]$.
\end{lemma}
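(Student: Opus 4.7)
The plan is to fix $i \le t$, condition on \stablematching's observation history through round $i-1$, and exploit the greedy max-weight structure of \stablematching's round-$i$ selection. Let $h$ denote \stablematching's information at the start of round $i$. Given $h$, the free vertex set $V_h := V \setminus V(S_{\le i-1})$ and the set $U_h$ of edges not yet queried by \stablematching\ are determined, and the new part of \stablematching's round-$i$ matching, call it $M_i$, is the deterministic greedy max-weight matching on $V_h$ using $U_h$ with weights $p_e$. Since $X_e$ is independent of $h$ for every $e \in U_h$, I would first observe that $\EE[|S_i| \mid h] = \sum_{e \in M_i} p_e$.

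Next, I would establish the key structural observation that pointwise on every sample $\Aug_i$ is itself a matching on $V_h$ using edges in $U_h$. It is a matching as a subset of \OPT's round-$i$ matching; its endpoints lie in $V_h$ because $\Aug_i$ is vertex-disjoint from $S_{\le t} \supseteq S_{\le i-1}$; and its edges lie in $U_h$ because any successful edge \stablematching\ queried through round $t$ would be committed to $S_{\le t}$, contradicting vertex-disjointness. To lift this set-containment into an expectation bound, I introduce the witness random set
\[
N := Q_i \cap U_h \cap \binom{V_h}{2},
\]
where $Q_i$ denotes the edges \OPT\ queries for the first time in round $i$. By construction $N$ is a matching on $V_h$ using $U_h$, and the pointwise bound $|\Aug_i| \le \sum_{e \in N} X_e$ holds (every $e \in \Aug_i$ lies in $N$ and has $X_e = 1$).

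I would then take expectations in two stages. First, conditional on $(h, N)$, each $X_e$ for $e \in N$ is Bernoulli with mean $p_e$: since $e \in U_h \cap Q_i$, neither algorithm has queried $e$ through round $i-1$, so $X_e$ is independent of both histories and hence of the pair $(h, N)$. This yields $\EE\bigl[\sum_{e \in N} X_e \mid h\bigr] = \EE\bigl[\sum_{e \in N} p_e \mid h\bigr]$. Second, because $M_i$ is a greedy (hence $\tfrac{1}{2}$-approximate) max-weight matching on $V_h$ using $U_h$, any matching $N$ on the same feasibility structure satisfies $\sum_{e \in N} p_e \le 2 \sum_{e \in M_i} p_e$ pointwise. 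Chaining these inequalities and taking a final expectation over $h$ delivers $\EE[|\Aug_i|] \le 2 \EE[|S_i|]$.

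The main obstacle I anticipate is the conditional independence step: $N$ depends on \OPT's history, which in general differs from \stablematching's history $h$, and the two algorithms reveal overlapping but distinct parts of the sample graph. The saving grace is the definition of $N$: requiring $e \in U_h \cap Q_i$ forces $e$ to have been queried by neither algorithm through round $i-1$, so edge independence lets me replace $X_e$ with $p_e$ inside the expectation. A naive witness set, such as all of \OPT's round-$i$ matching, would entangle the bound with realizations that \stablematching\ has already observed, and the argument would collapse.
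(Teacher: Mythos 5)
Your proposal is correct and follows essentially the same route as the paper's proof: the witness set $N = Q_i \cap U_h \cap \binom{V_h}{2}$ coincides with the paper's set $N_0$ of round-$i$ OPT edges that augment $S_{\le i-1}$ and lie outside both histories, and the chain (pointwise containment of $\Aug_i$, replacement of $X_e$ by $p_e$ for edges unqueried by either algorithm, and the $\nicefrac{1}{2}$-approximation of greedy max-weight matching) is identical. The only organizational difference is that the paper conditions on the full joint history $(h,h')$ so that the witness set is deterministic, which is exactly the finer partition one would invoke to make your ``$X_e$ is Bernoulli $p_e$ conditional on $(h,N)$'' step fully rigorous.
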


\begin{proof}%[Proof of Lemma \ref{lemma:dominationlemma}]

The expectations in the lemma are over all sample graphs. In the proof, we require a consideration of what each algorithm knows up to round $i$, and through this split up the probability space the expectations are being taken over. To do so, we define the notion of a history induced on an algorithm, and analyze the expected size of $S_i$ and $\Aug_i$ conditioned on these histories.

Fix some $i \le t$. For a sample graph $G$ and an algorithm $\mathcal{A}$, we define $H^{\mathcal{A}}(G)$, the \emph{history} that $G$ induces on $\mathcal{A}$, as follows. The history $H^{\mathcal{A}}(G)$ consists of the set of matchings $\mathcal{A}$ selects in the first $i-1$ rounds, along with the observed outcomes $\{X_e\}$ of all the edges in these matchings, conditioned on the sample graph being $G$. We next partition the probability space over sample graphs based on the distinct histories they induce on \stablematching\ in the first $i-1$ rounds. In particular, for any possible history $h$, let $\mathcal{G}_{h}$ denote the set of all sample graphs $G$ such that $H^{\stablematching}(G) = h$. 
%(i.e. \stablematching\ has history $h$ when run on $G$ for $i-1$ rounds). 
We use $\EE[ S_i \mid \mathcal{G}_{h} ]$ as a shorthand for the expected size of $S_i$, conditioned on the sample graph $G$ being in $\mathcal{G}_{h}$, and use $\PP [\mathcal{G}_{h}]$ for the probability that our sample graph $G$ is an element of $\mathcal{G}_{h}$. Note then that 
$$\EE[S_i] = \sum_{h} \EE[S_i \mid \mathcal{G}_{h}] \cdot \PP[\mathcal{G}_{h}], ~~\text{and}~~~\EE[\Aug_i] = \sum_{h} \EE[\Aug_i \mid \mathcal{G}_{h}] \cdot \PP[ \mathcal{G}_{h}],$$ where the sum is over all possible histories $h$. 
It hence suffices to show that $2 \cdot \EE[S_i \mid \mathcal{G}_{h}] \geq \EE[\Aug_i \mid \mathcal{G}_{h}]  $ for any history $h$. 

To do so, we analyze $\EE[\Aug_i \mid \mathcal{G}_{h}] $ by partitioning $\mathcal{G}_{h}$ based on the different histories these sample graphs induce on OPT. Specifically, for a fixed history $h'$, let $\mathcal{G}_{h, h'}$ denote all sample graphs $G$ such that $H^{\stablematching}(G) = h$ and $H^{\opt}(G) = h'$. Note 
 that $\EE[\Aug_i \mid \mathcal{G}_{h}] = \sum_{h'} \EE[\Aug_i \mid \mathcal{G}_{h, h'}] \cdot  \PP[\mathcal{G}_{h, h'}| \mathcal{G}_h],$ 
and therefore, to complete the proof of Lemma \ref{lemma:dominationlemma}, it suffices to show that $$2 \cdot \EE[S_i \mid \mathcal{G}_{h}] \ge \EE[ \Aug_i \mid \mathcal{G}_{h, h'}]$$ for any possible histories $h$, $h'$. The rest of the proof is devoted to this inequality.

% \my{(leaving this here to think about it more tomorrow: in the capacitated case, $N$ is \textbf{not} even disjoint from \textbf{successful} edges of history $h$.)}

Conditioned on our sample graph being in $\mathcal{G}_{h, h'}$, let $N$ be the \textit{new} edges selected by OPT in round $i$ that can augment the matching $S_{\leq i-1}$. As we mentioned earlier, we can assume that OPT is deterministic. So $N$ is uniquely determined by $h$ and $h'$. Also, by definition $N$ is disjoint from all edges in $h'$ and all successful edges in $h$. 
%, because OPT is selecting the edges in $N$ for the first time. $N$ is also disjoint from all successful edges in $h$, because every successful edge in $h$ is in $S_{\le i-1}$ and $N$ contains edges vertex-disjoint from $S_{\le i-1}$. Thus, $N$ consists only of edges that are either observed to be unsuccessful by $h$, or edge that do not belong to either $h$ or $h'$. 

Let $N_0 \subseteq N$ denote those edges in $N$ that belong to neither $h$ nor $h'$. Because edges are independent, for each edge $e$ outside history $h$ or $h'$, the probability that a random sample graph in $\mathcal{G}_{h, h'}$ includes edge $e$ is exactly $p_e$. Therefore, the expected number of edges in $N$ that are successful is given by $\sum_{e \in N_0} p_e$  as all edges in $N \setminus N_0$ are guaranteed to be unsuccessful.

%We make a simple structural observation about $\mathcal{G}_{h}$ and $\mathcal{G}_{h, h'}$. Namely,  note that $\mathcal{G}_{h}$ consists of all sample graphs with the exact same ``realization" as $h$ on the edges in that history, with every possible subgraph on the other edges of the input graph. This is because edges not in $h$ cannot affect how \stablematching\ runs in the first $i-1$ rounds. Hence, choosing a random sample graph conditioned on it being in $\mathcal{G}_{h}$ is equivalent to randomly realizing every edge outside of $h$ with the relevant probability. Similarly, choosing a random sample graph, conditioned on it being in $\mathcal{G}_{h, h'}$ is equivalent to randomly realizing every edge outside of $h$ or $h'$ with the relevant probability. Thus, for each edge $e$ outside history $h$ or $h'$, the probability that a random sample graph in $\mathcal{G}_{h, h'}$ includes edge $e$ is exactly $p_e$.

% $\EE[\Aug_i \mid \mathcal{G}_{H^A, H^O}]$ is equal to the sum of the probabilities of edges in $\Aug_i^+$.

Furthermore, observe that the set of edges in $N_0$ can feasibly augment $S_{\leq i-1}$ as every edge in $N_0$ is vertex-disjoint from $S_{\le i-1}$, and the edges in $N_0$ form a matching disjoint from $h$. Note that \stablematching\ chooses edges greedily to augment $S_{\leq i-1}$; as the greedy algorithm gives a $\nicefrac{1}{2}$-approximation to maximum weight matching, we hence have 
$$ 2 \cdot \EE[S_i \mid \mathcal{G}_h] \ge \sum_{e \in N_0} p_e.$$ Additionally, note that conditioned on our sample graph being in $\mathcal{G}_{h, h'}$, $\textsc{Aug}_i$ is guaranteed to be a subset of the successful edges in $N_0$, as edges augmenting $S_{\le t}$ must also augment $S_{\le i-1}$. Therefore, 
$$\sum_{e \in N_0} p_e \ge \EE[\Aug_i \mid \mathcal{G}_{h, h'}].$$ We conclude 
$$2 \cdot \EE[S_i \mid \mathcal{G}_h] \ge \EE[\Aug_i \mid \mathcal{G}_{h, h'}]$$ as desired. 
 \end{proof} 
 
 To obtain a bound of 0.316 on the approximation factor achieved by \stablematching\, we provide a refinement of the Domination Lemma.
  %The first difference in our analysis leading to an improved approximation factor is an extension to the Domination Lemma. Recall that our previous Domination Lemma was that ; 
    To prove the Domination Lemma, roughly we showed that for any fixed round $i$, if we look at the successful edges that OPT selects for the first time in round $i$ which can augment $S_{\le i-1}$, the expected size of this set is no more than $2 \cdot \EE[S_i]$. As $\Aug_i$ fits this description, this shows that $\EE[\Aug_i] \le 2\cdot \EE[S_i]$.     However, this lemma is loose; $\Aug_i$ does not necessarily comprise \emph{all} the edges that OPT selects for the first time in round $i$ which can augment $S_{\le i-1}$. In particular, certain edges in $\Adj_i$ might fit this description as well --- they certainly cannot augment $S_{\le t}$ by definition, but for small values of $i$ many of them might be able to augment $S_{\le i-1}$. 

%In this section, we give an analysis showing that including these edges in our lower bound on $\EE[S_i]$ leads to an approximation guarantee strictly better than 3. We do this via a refined Domination Lemma and Charging Lemma. Unlike in the previous section, it is not obvious how to most effectively combine the resulting set of inequalities for an overall bound. We do that using a \emph{factor-revealing LP}. 

    % Let $S_{\leq t}$ be the set of successful edges selected by the Greedy Commit algorithm in round $t$. Similarly let $O_{\leq t}$ be the set of successful edges selected by OPT in round $t$. The proof of Theorem~\ref{thrm:2.31} follows from  proving the following equation for all $t \in [T]$:
    % \begin{equation}\label{eq:greedy2.31}
    %   \mathbb{E}[\RoundAlgSucc{t}]  \ge 0.43 \cdot \mathbb{E}[O_{\leq t}].
    % \end{equation}

To refine the Domination Lemma, we categorize the edges in $\Adj$ according to which of $S_1$, $S_2$, $\ldots$, $S_t$ they are incident to; this is motivated by the goal of finding some edges in $\Adj_i$ that can augment $S_i$. In particular, we break up 
$$\textsc{Adj} = \textsc{Adj}(S_1) \sqcup \Adj(S_2) \sqcup \ldots \sqcup \Adj(S_t)$$ where $\Adj(S_j)$ denotes the subset of $\Adj$ consisting of edges incident with $S_j$ but not $S_{\le j-1}$. This is well-defined because the sets $S_1$, $S_2$, $\ldots$, $S_{t}$ are disjoint and form a matching. Similarly, for any fixed $j$ we break up
$$\Adj(S_j) = \Adj_1(S_j) \sqcup \Adj_2(S_j) \sqcup \ldots \sqcup \Adj_t(S_j)$$ where $\Adj_i(S_j)$ is all edges in $\Adj(S_j)$ that OPT first selected in round $i$. 

%This lets us present refined versions of the Domination Lemma and Charging Lemma. 
The informal statement of the refined Domination Lemma is that if we fix a round $i$, and look at all edges that OPT first discovered in round $i$ that can be added to $S_{\le j-1}$, the expected size of this set is at most twice the expected size of $S_j$. Our notation above lets us write this formally. 

\begin{lemma}[Refined Domination Lemma]\label{lemma:domination} 
Fix a round $t$. We have for all $i,j \le t$ that $$\EE[\Aug_i] + \EE[\Adj_i(S_j)] + \EE[\Adj_i(S_{j+1})] + \ldots + \EE[\Adj_i(S_t)] \le 2 \cdot \EE[S_j]. $$
\end{lemma}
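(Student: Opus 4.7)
The plan is to mirror the proof of \Cref{lemma:dominationlemma}, but apply the greediness of \stablematching\ in round $j$ rather than in round $i$. The crucial setup is a clean characterization of the edges counted on the left-hand side.

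First I would observe that $\Aug_i, \Adj_i(S_1), \ldots, \Adj_i(S_t)$ are pairwise disjoint, and that $\Adj_i(S_k)$ indexes edges of $\Adj_i$ by the smallest round $k$ in which the edge's endpoint lies in $S_k$. Hence the LHS equals the expected number of successful edges first selected by OPT in round $i$ that are \emph{not incident} to $S_{\le j-1}$.

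Next I would partition the probability space by conditioning on pairs of histories $(h, h')$, where $h$ is \stablematching's history through round $j-1$ and $h'$ is OPT's history through round $i-1$; let $\mathcal G_{h, h'}$ be the set of sample graphs consistent with both. Under $\mathcal G_{h, h'}$ the matching $S_{\le j-1}$ is determined by $h$ and OPT's round-$i$ matching $M_i^{\opt}$ is determined by $h'$. Define $N'$ as the subset of $M_i^{\opt}$ consisting of edges new to OPT in round $i$ and vertex-disjoint from $S_{\le j-1}$, and let $N_0' \subseteq N'$ be the edges that \stablematching\ has not queried through round $j-1$. Any $e \in N' \setminus N_0'$ was queried by \stablematching\ without being added to $S_{\le j-1}$ (otherwise $e$'s endpoints would lie in $S_{\le j-1}$, contradicting $e\in N'$), and hence failed and is deterministically absent from $G$. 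Since edges in $N_0'$ are unqueried by either algorithm, independence of edges gives
\[
\EE\!\left[\Aug_i + \textstyle\sum_{k=j}^{t}\Adj_i(S_k) \mid \mathcal G_{h, h'}\right] = \sum_{e \in N_0'} p_e.
\]

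The main step is the greedy bound. The set $N_0'$ is a matching (as a subset of $M_i^{\opt}$), vertex-disjoint from $S_{\le j-1}$, and avoids every edge already queried by \stablematching; hence $N_0'$ is a feasible augmentation of $S_{\le j-1}$ that \stablematching\ could have produced in round $j$. Because \stablematching\ greedily augments $S_{\le j-1}$ with respect to the current probabilities $\hat p_e$ (which equal $p_e$ on unqueried edges), and greedy matching is a $\tfrac12$-approximation to maximum weight matching,
\[
2\cdot \EE[S_j \mid \mathcal G_h] \;\ge\; \sum_{e \in N_0'} p_e.
\]
Since the quantity $\EE[S_j \mid \mathcal G_h]$ does not depend on $h'$, summing the two displayed inequalities weighted by $\PP[\mathcal G_{h, h'}]$ collapses the inner sum over $h'$ to $\PP[\mathcal G_h]$, and then summing over $h$ yields the lemma.

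The main obstacle is that the statement allows both $j \le i$ and $j > i$, so the argument must accommodate the case where \stablematching\ has pulled ``ahead of'' OPT in information. Concretely, when $j>i$, edges in $N'$ may have been queried by \stablematching\ in some intermediate round $i \le k < j$. The $N_0'$-based analysis above treats this case cleanly: a successful intermediate query would force the edge's endpoints into $S_{\le j-1}$, contradicting membership of $e$ in $N'$, while an unsuccessful query makes the edge deterministically absent from the sample graph. Once this unification is in place, the rest of the proof is essentially a direct transplant of the proof of \Cref{lemma:dominationlemma}, with the greedy step now invoked for round $j$ instead of round $i$.
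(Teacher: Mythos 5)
Your proposal is correct and follows essentially the same route as the paper's proof in Appendix E.1: condition on the pair of histories $(h,h')$ with $h$ running through round $j-1$ for \stablematching\ and $h'$ through round $i-1$ for \opt, identify the left-hand side with the new round-$i$ edges of \opt\ that can augment $S_{\le j-1}$, reduce to the unqueried subset $N_0'$, and invoke the $\nicefrac{1}{2}$-approximation of greedy augmentation in round $j$. The only nit is that your displayed equality $\EE[\cdot \mid \mathcal G_{h,h'}] = \sum_{e\in N_0'} p_e$ should be ``$\le$'' (as the paper writes), since \opt\ is not forced to commit and a successful edge first selected in round $i$ need not appear in $O_{\le t}$; the direction you actually use is the correct one, so the argument stands.
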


The proof of the Refined Domination Lemma only requires small changes from the proof of the Domination Lemma, and can be found in Appendix E.1.
% \Cref{refineddomlemmaproof}.

\subsection{Proof of Theorem \ref{thrm:2.31}}

We next state and prove the Charging Lemma, related to the fact that two maximal matchings in a graph always have sizes within a factor of 2. 

% \begin{lemma}[Charging Lemma]
% \label{lemma:charging1}We have $$2 \cdot \EE[S_{\le t}] \ge \sum_{i=1}^t \EE[\textsc{Adj}_i].$$
% \end{lemma} 
% \begin{proof} Fix a sample graph. Consider any edge $e \in \cup_i \textsc{Adj}_i$. By definition, it is incident to at least one edge in $S_{\le t}$; charge $e$ to one of the edges in $S_{\le t}$ it is incident to. Note that because $\cup_t \textsc{Adj}_t$ forms a matching, every edge in $S_{\le t}$ is charged at most twice. The result follows after averaging over all sample graphs.\end{proof} 

%With this as our new Domination Lemma, we present the updated statement of the Charging Lemma, but omit the proof because it is very similar to Lemma \ref{lemma:charging1}. 

%\begin{lemma}[Refined Charging Lemma] 
\begin{lemma}[Charging Lemma] \label{lemma:charging2}
Fix a round $t$. For all $j \le t$, 
$$\sum_{i=1}^t \EE[\Adj_i(S_j)] \le 2 \cdot \EE[S_j].$$
\end{lemma}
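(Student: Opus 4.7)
The plan is to prove the stronger pointwise inequality $|\Adj(S_j)| \le 2|S_j|$ on every sample graph and then take expectations. The first step is the observation that the sets $\{\Adj_i(S_j)\}_{i=1}^t$ partition $\Adj(S_j)$, because every edge of $\Adj(S_j) \subseteq O_{\le t}$ is first selected by $\OPT$ in a unique round $i \in \{1,\ldots,t\}$. Hence $\sum_{i=1}^t \EE[\Adj_i(S_j)] = \EE[|\Adj(S_j)|]$ and it suffices to establish $|\Adj(S_j)| \le 2|S_j|$ on every realization of the sample graph.

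The key structural fact I would use is that $\Adj(S_j)$ is itself a matching. Indeed, it is a subset of $O_{\le t}$, the set of successful edges in $\OPT$'s chosen matching for round $t$, and that set is a matching because $\OPT$ selects a matching in each round. Moreover, by the very definition of $\Adj(S_j)$, every $e \in \Adj(S_j)$ has at least one endpoint in the set $V(S_j)$ of vertices incident to the matching $S_j$; and $|V(S_j)| = 2|S_j|$ since $S_j$ is itself a matching.

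The bound then follows by a one-line double-count of the incidences between edges of $\Adj(S_j)$ and vertices of $V(S_j)$. Each edge of $\Adj(S_j)$ contributes at least one such incidence, yielding at least $|\Adj(S_j)|$ incidences. On the other hand, since $\Adj(S_j)$ is a matching, each vertex of $V(S_j)$ belongs to at most one edge of $\Adj(S_j)$, giving at most $|V(S_j)| = 2|S_j|$ incidences. Combining the two sides gives $|\Adj(S_j)| \le 2|S_j|$ pointwise, and taking expectations proves the lemma. This is precisely the ``two maximal matchings are within a factor of 2'' intuition alluded to in the lemma's preamble.

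The only obstacle worth noting is confirming the two bookkeeping facts the argument rests on, namely that $\Adj(S_j)$ really is a matching and that every one of its edges actually meets $V(S_j)$; both follow immediately from the definitions of $O_{\le t}$ and of $\Adj(S_j)$. In contrast with the Domination Lemma, there is no need to condition on histories here because the inequality already holds on every sample path, which is what makes the Charging Lemma comparatively clean.
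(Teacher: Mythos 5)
Your proof is correct and is essentially the paper's own argument: both rest on the facts that $\bigcup_i \Adj_i(S_j)$ is a matching and that each of its edges meets a vertex of $S_j$, giving the pointwise bound $|\Adj(S_j)| \le 2|S_j|$ before taking expectations. The paper phrases this as charging each such edge to an incident edge of $S_j$ (so each edge of $S_j$ is charged at most twice, once per endpoint), which is the same double count of incidences you carry out.
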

\begin{proof} Fix a sample graph. Consider any edge $e \in \cup_i \textsc{Adj}_i(S_j)$. By definition, it is incident to at least one edge in $S_j$; charge $e$ to one of the edges in $S_j$ it is incident to. Note that because $\cup_i \textsc{Adj}_i(S_j)$
forms a matching, every edge in $S_j$ is charged at most twice. Hence the result holds sample graph by sample graph, as well as in expectation over all sample graphs.\end{proof} 
%\begin{proof}
%We follow the proof of the previous Charging Lemma, noting that the edges in $\cup_i \textsc{Adj}_i(S_j)$ form a matching, and are each incident to at least one edge in $S_{j}$. 
%\end{proof}

The task remaining is to use the structure we have found to give an upper bound on $\frac{\EE[O_{\le t}]}{\EE[S_{\le t}]}$; we do so via a factor-revealing linear program, with constraints corresponding to the Refined Domination Lemma and the Charging Lemma.  
%In particular, we introduce variables corresponding to the quantities $\{\EE[\Aug_i]\}_{i\leq T}$, $\{\EE[\Adj_i(S_j)]\}_{i,j\leq T}$, and $\{\EE[S_j]\}_{j\leq T}$, encode the bounds from our lemmas as constraints, and optimize to provide a bound on the competitive ratio. Specifically, consider the following linear program, with variables $\{X_i\}_{1 \le i \le t}$ (corresponding to $\{ \EE[\Aug_i] \}$), $\{X_{i,j}\}_{1 \le i,j \le t}$ (corresponding to $\{ \EE[\Adj_i(S_j)] \}$), and $\{Y_j\}_{1 \le j \le t}$ (corresponding to $\{ \EE[S_j] \}$). 

%. This linear program includes constraints obtained by repeating Lemma ~\ref{lemma:Charging}  and~\ref{lemma:domination}  for every two rounds $i,j \in [T]$, comparing the size of new realized edges that Greedy Commit algorithm finds in round $i$, and new realized edges that Opt finds in round $j$. We will set the objective value of the LP such that the objective value upper bounds the approximation factor of Greedy Commit algorithm. Furthermore, we introduce a feasible dual solution with dual objective value 2.31 to complete the proof.
% Our constraints are significantly more complex than in the previous section; however, because they are entirely linear we can use a \emph{factor-revealing linear program}. In the following, we introduce a linear program $(P)$.
\vspace{1em}
\textbf{Factor-revealing LP (Primal)}
\setcounter{equation}{0}
\begin{align}
\nonumber \max\;\;\;\;\;\; \sum_{i = 1}^{t} X_{i} +  \sum_{i = 1}^{t}  \sum_{j = 1}^{t} & X_{i,j} &&\text{ }  \\
\text{such that \;\;\;\;\;\;\;}
X_{i} + \sum_{q = j}^{t} X_{i,q} &\leq 2Y_j \;\; &&\text{for all } 1 \le i, j \le t  \\
\sum_{i = 1}^{t} X_{i,j} &\leq 2 Y_j \;\; &&\text{for all } 1 \le j \le t \\
\sum_{j = 1}^{t} Y_{j} &\leq 1 \;\; &&\text{  }\\
 Y_j \geq 0, \; X_{i} \geq 0, \; & X_{i,j} \geq 0  \;\; &&\text{for all }  1 \le i,j \le t 
\end{align}
\vspace{1em}

%We introduce variables corresponding to the quantities $\{\EE[\Aug_i]\}_{i\leq T}$, $\{\EE[\Adj_i(S_j)]\}_{i,j\leq T}$, and $\{\EE[S_j]\}_{j\leq T}$, encode the bounds from our lemmas as constraints, and optimize to provide a bound on the competitive ratio. Specifically, consider the following 
In this linear program, the variables $\{X_i\}_{1 \le i \le t}$ correspond to $\{ \EE[\Aug_i] \}$, the variables $\{X_{i,j}\}_{1 \le i,j \le t}$ correspond to $\{ \EE[\Adj_i(S_j)] \}$, and the variables $\{Y_j\}_{1 \le j \le t}$ correspond to $\{ \EE[S_j] \}$, and the bounds from the lemmas are encoded as constraints. 

Given any instance of our matching problem, we can construct a feasible solution for this LP by setting: 
$$X_{i} = \frac{\EE[\Aug_i]}{\EE[S_{\le t}]}, \; X_{i,j} = \frac{\EE[\Adj_i(S_j)]}{ \EE[S_{\le t}]}, Y_j = \frac{\EE[S_j]}{\EE[S_{\le t}]}.$$
Indeed, note that when we set the variables in this way, (1) directly states the Refined Domination Lemma, and (2) directly states the Charging Lemma. Also, because $\sum_j |S_j| = |S_{\le t}|$ we have that (3) holds. For the suggested feasible solution, we can  note that the objective simplifies to: $$\sum_{i = 1}^{t} X_{i} +  \sum_{i = 1}^{t}  \sum_{j = 1}^{T} 
%& 
X_{i,j}  = \frac{\sum_{i} \EE[\Aug_i] + \sum_{i, j} \EE[\Adj_i(S_j)] }{\EE[S_{\le t}]} = \frac{\EE[O_{\le t}]}{\EE[S_{\le t}]}$$ Hence, the maximum objective obtained by our LP gives a lower bound on the worst-case competitive ratio that \stablematching\ achieves against OPT. 

To give a bound on the maximum value obtained by this LP, we take the dual, noting that by weak duality it suffices to analyze the value obtained by a specific feasible solution. The dual of our LP is given below. 
\vspace{1em}

\textbf{Factor-revealing LP (Dual)}
\setcounter{equation}{0}
\begin{align}
\nonumber \min\;\;\;\;\;\;  u \hfill  && \text{ } \\
\text{such that \;\;\;\;\;\;\;}
\sum_{q = 1}^{j} F_{i,q} + c_j &\geq 1 \;\; &&\text{for all } 1 \le i, j \le t \\
\sum_{i = 1}^{t} 2F_{i,j} + 2c_j &\leq u \;\; &&\text{for all } 1 \le j \le t \\
\sum_{j = 1}^{t} F_{i,j} &\geq 1 \;\; &&\text{for all } 1 \le i \le t \\
 u \geq 0, \; c_{j} \geq 0, \; & F_{i,j} \geq 0  \;\; &&\text{for all } 1 \le i, j \le t
\end{align}
\vspace{1em}

Consider a feasible solution for the dual LP given by 
$F_{i,j} = \frac{1}{1 + t(e - 1)} \cdot \left( \frac{t}{t-1} \right)^{j-1} \text{ for } 1 \le i, j \le t $, 
% $$c_j = 1 - \frac{2}{t(e^2 - 1)} \times \frac{(\frac{t}{t-2})^j - 1}{\frac{t}{t-2} - 1} =  1 - \frac{t-2}{t(e^2 - 1)} \times \Big( (\frac{t}{t-2})^j - 1\Big)$$
 $c_j =   1 - \frac{t^j - (t-1)^j}{t(t-1)^{j-1}e - (t-1)^{j}} 
\text{ for } 1 \le j \le t $ and
$ u= \max_{j \in [t] } \left \{  \sum_i 2F_{i,j}+2c_j  \right \} = 2 + \frac{2(t-1)^t}{t^t - (t-1)^{t}}$. \footnote{This is the optimal Dual solution. Note that if we change our objective, we might be able to get better approximation factors; e.g. if we change the objective to be the cumulative reward, which sums the successful selected edges over every round instead of finding a round-by-round approximation guarantee, we will be able to achieve a 2.96-approximation factor.} The objective value $u = u(t)$ is increasing in $t$ and 
 $$\lim_{t\rightarrow\infty} u(t) = 2 + \frac{2}{\mathrm{e}-1}.$$
% $$\lim_{t\rightarrow\infty} \left( 2 + \frac{2(t-1)^t}{ t^t - (t-1)^{t}} \right) = 2 + \frac{2}{\mathrm{e}-1}.$$

Hence by weak duality, the objective value of our primal is at most $2 + \frac{2}{e-1}$. It follows that the expected reward of \stablematching\ in round $t$ is at least a $ \left( 2 + \frac{2}{e-1} \right)^{-1} \ge 0.316$ fraction of the expected reward of OPT in round $t$. This completes the proof of Theorem~\ref{thrm:2.31}.

%\subsection{Comparing \greedycommit\ and \OPT}
\subsection{Discussion: How much better can a matching platform do?}\label{sec:upperbound}

So far, we have shown that a decentralized stable matching process achieves at least a $0.316$ proportion of the reward of the optimal matching service. In this section, we provide some additional discussion on what else is achievable by the platform.

% by considering the possible outcomes if the platform instead invested in an off-the-shelf matching service that myopically found the matching in each round with the highest expected number of successful matches, and committed to successful matches. We show that such a service is at least a $0.43$-approximation to \opt, but neither 
% \begin{itemize}
%     \item greedy-commit and opt (if platform invests in rudimentary matching service)
%     \item both of them get at most half
%     \item not all of this is due to gap between committing and not committing, since opt-commit is 0.5-approx to opt
% \end{itemize}

First, suppose the platform could invest in an off-the-shelf matching process. 
In particular, we consider the \greedycommit\ algorithm that was previously proposed by \cite{chen2009approximating}. The \greedycommit\ algorithm proceeds by proposing in the first round a matching of vertices in $V$ that maximizes the sum of edge probabilities, and in subsequent rounds committing to keeping all successful edges from the previous rounds, and augmenting them with a matching that maximizes the sum of edge probabilities in the remaining graph. We describe \greedycommit\ formally in Appendix B.
% \Cref{app:greedycommit}. 

%  \begin{tcolorbox}
 
%  \textbf{The Greedy-Commit Algorithm}
 
% % \emph{Input}: A graph $G = (V, E)$ and edge realization probabilities $\{p_e\}_{e \in E}$. \\
% % \emph{Initialization}: Start the matching process from the first round, by setting $t = 1$. 
% Initialize the set of successful edges $A \leftarrow \emptyset$. Let $S \leftarrow V.$ 

% For rounds $1 \leq t \leq T$:
% \begin{itemize} 
%     \item Find a matching $M_t$ between vertices in $S$ maximizing $\sum_{e\in M_t} p_e.$ 
%     \item Output $M_t \cup A$  as the selected matching for round $t$.
%     \item If an edge $e$ in $M_t$ is successful, then add it to $A$ and remove both its endpoints from $S$.  Otherwise, set $p_e \leftarrow 0$.
% \end{itemize}
% \end{tcolorbox}

Note that \greedycommit\ is computable in polynomial-time, and is incentive-compatible across rounds but not within rounds. 
Hence \greedycommit\ can be thought of as a matching service that \emph{hides information} and myopically dictates a maximum matching in each round, but cannot incentivize matched agents to be rematched in future rounds.

Chen et al. showed that \greedycommit\ gives a $0.25$-approximation to \opt.\footnote{
\cite{chen2009approximating} provide a proof for a slightly different result which works to prove Lemma~\ref{lem:UB}; see footnote~\ref{fn:chen}.
}  As an extension to our other results in this section, we show that our techniques improve on the analysis of Chen et al. In particular, we show the following result.

\setcounter{theorem}{1}
\begin{theorem}\label{thrm:greedy-commit}
The expected reward of \greedycommit\ is at least a 0.43-approximation to \opt.
\end{theorem}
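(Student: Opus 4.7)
The plan is to adapt the proof of Theorem~\ref{thrm:2.31} to \greedycommit, exploiting the fact that \greedycommit\ selects an exact max-weight augmenting matching in each round rather than a greedy one. Let $G_i$ denote the set of successful edges newly selected by \greedycommit\ in round $i$, and define $\Aug_i$ and $\Adj_i(G_j)$ analogously by substituting $G_{\le t}$ for $S_{\le t}$ in the definitions of Section~\ref{sec:notation}. The coupling framework that partitions $O_{\le t}$ into augmenting and adjacent edges carries over verbatim.

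The key improvement is a tighter Refined Domination Lemma. Inspecting the proof of Lemma~\ref{lemma:dominationlemma}, the factor of $2$ appears precisely when invoking the $\nicefrac{1}{2}$-approximation of greedy matching to maximum-weight matching. Because \greedycommit\ selects the max-weight augmenting matching exactly, this step instead yields $\EE[G_i \mid \mathcal{G}_h] \geq \sum_{e \in N_0} p_e$ with no loss. Propagating this through the refinement gives, for all $i, j \leq t$,
$$\EE[\Aug_i] + \sum_{q=j}^{t} \EE[\Adj_i(G_q)] \leq \EE[G_j].$$
The Charging Lemma is purely combinatorial and remains $\sum_{i=1}^t \EE[\Adj_i(G_j)] \leq 2 \EE[G_j]$.

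With these ingredients, I would set up the same factor-revealing LP as in the proof of Theorem~\ref{thrm:2.31}, except that constraint (1) now reads $X_i + \sum_{q=j}^t X_{i,q} \leq Y_j$, losing its factor of $2$. In the dual I would try the ansatz $F_{i,j} = \alpha r^{j-1}$ with $c_j$ chosen to tighten the $X_{i,j}$-constraints. Writing $r = 1 + x/t$ and letting $t \to \infty$, the $Y_j$-constraint LHS at index $j = \beta t$ converges to $\frac{(x-2) e^{\beta x} + 2 e^x}{e^x - 1}$, whose maximum over $\beta \in [0,1]$ is minimized at $x = 2$---the value at which the coefficient of $e^{\beta x}$ vanishes so the constraints at $\beta = 0$ and $\beta = 1$ are simultaneously tight. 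This yields $u = \frac{2 e^2}{e^2 - 1}$, and by weak duality the approximation ratio is at least $\frac{e^2 - 1}{2 e^2} = \frac{1 - e^{-2}}{2} \approx 0.4323 > 0.43$. The main obstacle I expect is nailing down the right geometric ratio (asymptotically $r \approx 1 + 2/t$ in place of the previous $r \approx 1 + 1/t$) and a short calculus check confirming that $x=2$ is the unique minimizer over $x$ of the max over $\beta$; once that is done, the remainder of the argument is routine verification of primal/dual feasibility.
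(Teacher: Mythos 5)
Your proposal is correct and follows essentially the same route as the paper's Appendix E.2 proof: the Charging Lemma is reused unchanged, the Refined Domination Lemma is strengthened by a factor of $2$ because \greedycommit\ selects an exact max-weight augmenting matching, and the resulting factor-revealing LP is bounded via a dual solution whose geometric ratio is $\frac{t}{t-2}$ (your $r \approx 1 + 2/t$), yielding the same limit $u = 2 + \frac{2}{e^2-1}$ and ratio $\frac{1-e^{-2}}{2} \approx 0.4323$.
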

\setcounter{theorem}{3}

%, and at most a 0.5-approximation to \opt.

The proof proceeds very similarly to the proof of the approximation factor for \stablematching. The Charging Lemma continues to hold, and a stronger version of the Refined Domination Lemma holds, improved by a factor of 2. These constraints yield a different factor-revealing LP which we use to prove the claimed 0.43 factor. Details can be found in Appendix E.2.
% \Cref{greedycommitanalysis}. %We prove the upper bound in the subsection \ref{sec:upperbound}. 

% %In the remainder of this section, we will bound the approximation ratio of \stablematching\ with respect to the optimal online algorithm. 
% %\Algorithm \label{GreedyCommit} Greedy Commit:
% %\normalfont{}
% %Intuitively in this algorithm, in each round, we first look at all edges that have been matched (and hence queried) in previous rounds, and select all edges with a reward realization 1; equivalently once an edge has reward realization 1, its endpoints are irrevocably matched in every future round after the realization happens (`commit'). We then select a matching among the remaining unmatched nodes that have maximum expected reward value (`greedy'). Below we provided a detailed explanation of the algorithm:

% %\subsection{Upper Bound: \stablematching\ is not a $(\nicefrac{1}{2} + \epsilon)$-approximation}%\label{sec:upperbound}
% %In this section, we show an upper bound of $\nicefrac{1}{2}$ on the approximation ratio of \stablematching\ against OPT.

Another natural question is what upper bound we can give on how well \stablematching\ and \greedycommit\ can perform compared to the optimum online algorithm. Although computing \opt\ is NP-hard, perhaps one of them could be very close to optimal, in the sense of guaranteeing a $(1-\varepsilon)$-approximation to \opt\ for some small $\varepsilon$? We show this is not the case; in particular, we show that both \stablematching\ and \opt\ provide at best a $\nicefrac{1}{2}$-approximation to the optimal online algorithm.

\begin{lemma}\label{lem:UB}
The expected reward of \stablematching\ and \greedycommit\ are at most a $\nicefrac{1}{2}$-approximation to \opt.
\end{lemma}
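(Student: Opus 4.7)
The plan is to prove Lemma~\ref{lem:UB} by exhibiting, for each of \stablematching\ and \greedycommit, an instance on which the approximation ratio to \opt\ approaches $\nicefrac{1}{2}$. Since the claim is a worst-case upper bound, a single instance per algorithm suffices, and the two instances may differ.

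For \stablematching, I would use the classical path construction exploiting the $\nicefrac{1}{2}$-integrality gap between greedy and maximum-weight matching. Specifically, take $T=1$ round on four agents $\{a,b,c,d\}$ with edge probabilities $p_{\{a,b\}}=p_{\{c,d\}}=1-\epsilon$, $p_{\{b,c\}}=1$, and all other pairs having probability zero. The \stablematching\ procedure, being a greedy matching on probabilities, selects the unique heaviest edge $(b,c)$ first; the remaining vertices $a,d$ have $p_{\{a,d\}}=0$ and are thus incompatible, so \stablematching\ terminates with reward $1$. Since $T=1$, \opt\ coincides with the maximum-weight matching, namely $\{(a,b),(c,d)\}$, which earns expected reward $2(1-\epsilon)$. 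Taking $\epsilon\to 0$, the ratio $\frac{1}{2(1-\epsilon)}\to \nicefrac{1}{2}$, as required.

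For \greedycommit\ the situation is more delicate: because \greedycommit\ already selects a maximum-weight matching in every round, it coincides with \opt\ on single-round instances, so the gap must arise purely from commitment across multiple rounds. I would construct an instance with $T\geq 2$ in which \greedycommit's round-$1$ max-weight matching irrevocably commits to vertices whose future matching potential is large, while \opt\ preserves flexibility by choosing a different round-$1$ matching---sacrificing a small amount of round-$1$ reward---and exploits subsequently discovered edges to approximately double the cumulative reward. The intended construction would place edge probabilities so that the max-weight matching remains myopically optimal in round~$1$, yet commits to a ``dead end'' from the perspective of later rounds, where \opt's non-committal exploration unlocks a structurally richer matching.

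The main technical obstacle is the \greedycommit\ construction itself. Because committing to a known successful edge gives guaranteed reward $1$ per round while exploration yields at most $1$ in expectation, simple small-graph instances tend to produce only modest gaps (typically well above $\nicefrac{1}{2}$). Reaching exactly a factor-of-$2$ gap likely requires carefully tuned edge probabilities (plausibly in the interval $(\nicefrac{1}{2},1)$) together with an asymptotic argument as $T\to\infty$, so that the per-round losses from commitment compound to halve \greedycommit's cumulative reward relative to \opt. By contrast, the \stablematching\ case is immediate from the folklore path example and needs no multi-round argument.
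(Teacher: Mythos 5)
Your \stablematching\ half is complete and correct: the one-round path $a\text{--}b\text{--}c\text{--}d$ with $p_{\{b,c\}}=1$ and $p_{\{a,b\}}=p_{\{c,d\}}=1-\epsilon$ does force the greedy stable matching to take the middle edge and forfeit the size-two matching, giving a ratio of $\frac{1}{2(1-\epsilon)}$. This is a legitimate (and simpler) alternative to the paper's example for that algorithm. However, the \greedycommit\ half is a plan rather than a proof: you correctly diagnose that single-round instances cannot work because \greedycommit\ is round-wise optimal, but you never exhibit the multi-round instance, and you explicitly flag its construction as an unresolved obstacle. That construction is the entire content of the lemma for \greedycommit, so the proof is incomplete. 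Moreover, the shape you sketch --- a fixed small graph with carefully tuned probabilities ``plausibly in $(\nicefrac{1}{2},1)$'' and an asymptotic argument as $T\to\infty$ --- would not reach the factor of $2$. Edge realizations are persistent, so re-querying the same pair gives no fresh randomness; on any fixed finite graph both algorithms' matchings stabilize after finitely many rounds, and the probability that \opt\ ever secures a two-edge matching is bounded away from $1$ (e.g.\ on the four-vertex version of the paper's instance the ratio is only about $0.8$). To drive the failure probability to $0$ you must let the \emph{graph} grow, not just $T$.

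The paper closes the gap with a single instance that kills both algorithms at once: a ``double star'' on $u_1,\dots,u_n$ and $v_1,\dots,v_n$ whose only edges are $\{u_1v_i\}$ and $\{u_iv_1\}$, with $p_{\{u_1,v_1\}}=1$ and all other probabilities equal to $\nicefrac{1}{2}-\epsilon$, over $T=n^2$ rounds. The point of putting the side probabilities \emph{below} $\nicefrac{1}{2}$ (not in $(\nicefrac{1}{2},1)$) is that the two-edge matching $\{u_1v_{i+1},u_{i+1}v_1\}$ then has expected weight $1-2\epsilon<1$, so the unique max-weight matching in round $1$ is the singleton $\{u_1v_1\}$; both \stablematching\ and \greedycommit\ select it, it succeeds with certainty, they commit, and since every other edge touches $u_1$ or $v_1$ they are locked at reward $1$ per round, i.e.\ $n^2$ total. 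Meanwhile an algorithm that queries $\{u_1v_{i+1},u_{i+1}v_1\}$ in round $i$ for $i\le n-1$ has $n-1$ independent chances for both edges of some such pair to be realized, so with probability $1-o(1)$ it finds a two-edge matching by round $n$ and collects $2n^2(1-o(1))$ over the remaining rounds, yielding the $\nicefrac{1}{2}+o(1)$ bound. If you adopt this instance your \stablematching\ example becomes unnecessary, though it remains a valid standalone argument for that case.
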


\begin{proof}
Consider the bipartite graph $G_{n, \epsilon}$ with bipartition $(U=(u_1,\ldots,u_n),V=(v_1,\ldots,v_n))$ edges $E=\{u_1v_i\}\cup\{u_iv_1\}$, the edge $u_1 v_1$ has probability 1 of being successful, and every other edge has probability $0.5 - \epsilon$. We illustrate the graph in Figure \ref{fig:Greedy2approx} in the appendix. 

Suppose there are $T = n^2$ rounds. 
\stablematching\ matches $u_1$ to $v_1$ in the first round. Since the edge is successful, $u_1$ will be matched to $v_1$ in every subsequent round. Hence \stablematching\ obtains a reward of $n^2$. 

We now define a different algorithm that selects the matching $(u_1 v_{i+1}) \cup (u_{i+1} v_1)$ in round $i$, for $1 \le i \le n-1$. With probability $1 - o(1)$, both edges are successful in at least one of these rounds. In this case the algorithm can select this matching from round $n$ to round $T$, obtaining payoff $2n^2(1 - o(1))$ in these rounds. In expectation, the reward of the above algorithm is at least $(1-o(1)) \cdot 2n^2(1-o(1)) = 2n^2(1 - o(1)).$ As \stablematching\ only obtains a reward of $n^2$, taking $n$ sufficiently large proves the claim. 

The same example proves that \greedycommit\ is at most a $(\nicefrac{1}{2} + \epsilon)$-approximation to OPT. 
\end{proof}

%\subsection{OPT vs OPT-Commit}
Finally, one might wonder how much of this gap between stable matching or \greedycommit\ and \opt\ is due to the fact that both algorithms commit to previously matched successful pairs, while \opt\ does not.
Consider the optimal online algorithm that is restricted to committing to successfully matched pairs \emph{(\opt-Commit)}.
\opt-Commit can be thought of as the optimal matching service that improves on \greedycommit\ by dictating a matching in each round in an optimal \emph{non-myopic} manner, but again cannot incentivize matched agents to be rematched in future rounds.
We note that the expected reward achieved by \OPT-Commit is not always the same as \opt\      (see Appendix C
% \Cref{appendix:greedy-vs-greedy-commit}
), so committing causes some loss of reward.

Furthermore, we show 
%Although OPT is a more powerful algorithm than OPT-Commit, but we show in \Cref{appendix:greedy-vs-greedy-commit} 
that  OPT-Commit achieves at least a 0.5-approximation to \opt. 
%This suggests that if \stablematching\ and \greedycommit\ do not in fact achieve a 0.5-approximation to \opt\ (our upper bound), 
Since \stablematching\ and \greedycommit\ achieve at best a $0.5$-approximation to \opt, if they do not achieve this then part of the loss is due to the fact that both \stablematching\ and \greedycommit\ are \emph{myopic} and do not account for the effect of current queries on possible matchings in future rounds, rather than the fact that both commit to previously matched successful pairs.
\begin{proposition}\label{prop:opt-optcommit}
The expected reward of \opt-commit is at least a $\nicefrac{1}{2}$-approximation to \opt.
\end{proposition}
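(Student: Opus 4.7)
The plan is to construct, from an optimal online policy $\pi^*$ for \OPT, a committing policy $\pi'$ whose expected reward is at least half that of $\pi^*$. The natural construction is: at each round $t$, let $M_t$ be the matching $\pi^*$ would play and let $C_{t-1}$ be the successful edges $\pi'$ has committed to so far; then $\pi'$ plays $N_t = C_{t-1} \cup (M_t \setminus V(C_{t-1}))$, the matching obtained by dropping any edges of $M_t$ that conflict with committed edges. This $N_t$ is a valid matching extending $C_{t-1}$, so $\pi'$ is an \OPT-Commit policy.

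The core analysis is a deterministic matching argument on each sample graph $G$. The successful edges of $N_t$ form $C_t := C_{t-1} \sqcup ((M_t \cap G) \setminus V(C_{t-1}))$, so $\pi'$'s round-$t$ reward is $\omega_t |C_t|$. Since $M_t \cap G$ is a matching, at most $|V(C_{t-1})|=2|C_{t-1}|$ of its edges can share a vertex with an edge in $C_{t-1}$; hence $|C_t| \ge |M_t \cap G| - |C_{t-1}|$. Combining this with the trivial $|C_t| \ge |C_{t-1}|$ yields $|C_t| \ge |M_t \cap G|/2$ on every sample graph, and summing $\omega_t |C_t|$ over $t$ shows that $\pi'$'s reward is at least half of $\pi^*$'s reward realization by realization.

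I expect the main technical obstacle to be an information mismatch: $\pi'$'s commitment constraint prevents it from ever observing the outcomes of edges in $M_s^{\pi^*}$ that share a vertex with $C_{s-1}$, since playing them would require dropping a committed edge. Consequently, $\pi'$ generally does not have enough of its own history to compute $M_t$ in rounds $t>1$. My plan to resolve this is a coupling argument exploiting the independence of edge outcomes: $\pi'$ internally maintains a simulated $\pi^*$-history in which unobserved outcomes are resampled from their priors, producing a trajectory distributionally identical to $\pi^*$'s real trajectory. Careful accounting of the correlations introduced by resampling is then needed to translate the per-sample bound $|C_t| \ge |M_t \cap G|/2$ into the expectation statement $\EE[\pi'] \ge \EE[\OPT]/2$, which by optimality of \OPT-Commit yields the proposition.
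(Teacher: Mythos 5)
Your construction is the same as the paper's: its algorithm $\mathcal{A}$ plays exactly your $N_t$ (committed successes plus OPT's round-$t$ edges that do not conflict with them), and its charging argument --- every successful OPT edge is either kept by $\mathcal{A}$ or adjacent to a kept edge, so $|O_{\le t}| \le 2\,|S_{\le t}|$ on each sample graph --- is equivalent to your recursion $|C_t| \ge |M_t \cap G| - |C_{t-1}|$ combined with $|C_t| \ge |C_{t-1}|$. The information mismatch you flag is genuine and the paper's writeup silently elides it, but your resampling fix closes the gap with no further correction needed: any edge whose outcome must be simulated is adjacent to an already-committed edge and hence is never queried by $\pi'$ in any later round, so the virtual OPT trajectory is distributed exactly as the real one while every edge $\pi'$ actually keeps is genuinely successful, and the pointwise bound $|C_t| \ge |M_t \cap G'|/2$ passes directly to expectations.
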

The proof of Proposition~\ref{prop:opt-optcommit} relies on constructing an algorithm that commits to successfully matched pairs, and that is a $\nicefrac{1}{2}$-approximation to \opt. The algorithm (which we denote by $\mathcal{A}$) proceeds as follows: in round $i$, $\mathcal{A}$ selects all the successful edges it has found so far, as well as the edges that \opt\ selects in round $i$ which can augment this matching. We show that in every round the expected size of the matching found by $\mathcal{A}$ is at least a $\nicefrac{1}{2}$-approximation to the expected size of the matching selected by \opt. We provide a complete proof of Proposition \ref{prop:opt-optcommit} in Appendix C.

\smallskip 

In sum, we have considered a number of different matching algorithms. \stablematching\ is decentralized, while the rest are \emph{centralized} matching platforms that determine and prescribe either a myopic maximum matching (\greedycommit) or a non-myopic optimal matching (\opt-Commit, \opt). All of the centralized matching platforms must hide some amount of information from participants, as the prescribed matchings in each round are not necessarily stable and hence not within-round incentive-compatible. In the case of \opt\ the matching proposed by the platform in addition is not across-round compatible, as it needs to coerce successfully matched agents to return and be rematched in future rounds. Both types of incentive constraints are not without loss, and in addition in order to implement either \opt-Commit and \opt\ the platform must solve an NP-hard problem.  Nonetheless all of these platform designs attain an expected reward within a constant-factor approximation of \opt.

%%%%%%%%%%%%%%%%%%%%%%%%%%%%%%%%%%%%%%%%%%%%%%%
%%%%%%% 2.31 approx, comparing Greedy Commit to Opt, in non-capacitated version
%%%%%%%%%%%%%%%%%%%%%%%%%%%%%%%%%%%%%%%%%%%%%%%
%%

%\newpage
\section{Capacitated Matching}\label{sec:4-approx}

 In this section, we consider the difference between stable matching and optimal online matching in more general settings, such as where all agents can be in bilateral matches with multiple of any of the other agents, or agents can be grouped in teams. For example, styling and clothing rental companies may have limited inventory of different items to send to customers, and students may be able to take on multiple projects, or may be grouped into project teams.

Formally, we have a general graph $G(V,E)$ where vertices have capacities $\{C(v)\}_{v \in V}$ given by arbitrary positive integers. In each of $T$ rounds, we can select a capacitated matching among these agents, i.e. some $E' \subseteq E$ such that each $v$ has at most $C(v)$ edges in $E'$ incident to it. As in the previous section, each agent's $v$'s goal is to maximize the expected number of successful matches they participate in over all rounds, and the platform's goal is to maximize the weighted sum of the rewards collected in all $T$ rounds. 

How can we define an appropriate decentralized outcome in this general setting? % decentralized matching enabled by search with the optimal centralized adaptive algorithm?
Suppose agents can observe their compatibility $p_{\{i,j\}}$ with every other agent.
A similar argument to the previous section shows that each agent $i$ maximizes their individual expected number of successful matches by matching in each round with the $C(v)$ agents who are most likely to be compatible with $i$ (including agents to whom $i$ has been successfully matched in the past). Hence, in the decentralized matching, there again will be no \emph{blocking pair} of two agents who are not matched to each other but both prefer the other to at least one of their matched partners. Moreover, once a pair of agents is successfully matched they will continue matching with each other in future rounds. This determines a decentralized capacitated stable matching process, which we will again denote by \stablematching. %This \emph{decentralized stable matching} can be determined using the following process.
%The \stablematching\ algorithm operates in exactly the same fashion as before, committing to keeping any successful edges it has observed thus far and augmenting in each round with the unobserved capacitated stable matching. \my{Should we add any notes showing \stablematching\ is well defined in capacitated graphs?}

\smallskip

In this general setting, we show that \stablematching\ gives a constant-factor approximation to \opt.
%\ilcomment{I would state both this theorem and the refinement for bipartite here, since I agree that the 1/11 doesn't look so great when we call it the main result'}
\begin{theorem} \label{thm:8-approx} In the setting where agents have arbitrary capacities, the expected reward achieved by the \stablematching\ algorithm is at least a $\nicefrac{1}{11}$ fraction of the expected reward of the optimal online algorithm.
\end{theorem}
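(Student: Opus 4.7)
The plan is to mirror the proof of Theorem~\ref{thrm:2.31} by generalizing the Domination and Charging Lemmas to the capacitated setting and combining them through a factor-revealing LP. First, I would extend the notation of Section~\ref{sec:notation}: let $\Aug$ consist of edges in $O_{\le t}$ that respect the remaining capacity of $S_{\le t}$ at both endpoints (rather than strict vertex-disjointness), let $\Adj = O_{\le t} \setminus \Aug$, and define $\Aug_i$, $\Adj_i$, and $\Adj_i(S_j)$ analogously to the uncapacitated case so that the decompositions $\Aug = \sqcup_i \Aug_i$ and $\Adj = \sqcup_{i,j} \Adj_i(S_j)$ still hold.

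For the generalized Domination Lemma, the coupling-with-histories argument from Lemma~\ref{lemma:dominationlemma} carries over almost verbatim. Conditioning on a pair of histories $(h, h')$ induced on \stablematching\ and \opt\ through round $i-1$, the expected number of new successful \opt\ edges that augment $S_{\le i-1}$ equals $\sum_{e \in N_0} p_e$, where $N_0$ forms a $b$-matching on the residual graph. In round $i$, after committing to $S_{\le i-1}$, \stablematching\ selects edges greedily subject to the residual capacities, which is exactly the standard greedy algorithm for weighted $b$-matching and so gives a $\nicefrac{1}{2}$-approximation to the maximum-weight residual $b$-matching. Taking expectations then gives $c_1 \cdot \EE[S_i] \ge \EE[\Aug_i]$ for a suitable constant $c_1$.

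The main obstacle is generalizing the Charging Lemma. The uncapacitated proof crucially uses that $\cup_i \Adj_i(S_j) \subseteq O_{\le t}$ is a $1$-matching, so each $S_j$ edge is incident to at most two $\Adj$ edges. In the capacitated setting each endpoint of an $S_j$ edge can be incident to up to $C(v)$ edges of $\Adj$, and simple examples (e.g.\ a single hub vertex with capacity $k$) show the ratio $|\Adj(S_j)|/|S_j|$ can grow with $k$, so the naive two-charges-per-edge accounting breaks. The plan is to design a capacity-weighted charging scheme: each edge $e = uv \in \cup_i \Adj_i(S_j)$ distributes its unit of charge evenly across the $S_j$ edges at a shared endpoint $v$. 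Using that any such $v$ must be saturated in $S_{\le t}$ (so has $C(v)$ incident $S_{\le t}$-edges) and that $v$ has at most $C(v)$ incident edges in $\Adj$ (since $O_{\le t}$ respects capacities), the aggregate charge can be controlled; it may be necessary to bundle the per-round bound into an aggregate bound spanning $S_j, S_{j+1}, \ldots, S_t$ so that the charges distribute across all of the $S$-edges discovered at each hub vertex. Plugging the resulting constants $c_1$ and $c_2$ into a factor-revealing LP analogous to Section~\ref{sec:notation} and bounding its dual value will then yield the claimed $\nicefrac{1}{11}$ approximation.
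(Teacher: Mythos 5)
There is a genuine gap, and it sits exactly where the paper warns it would: in your generalized Domination Lemma. You define $\Aug$ as the edges of $O_{\le t}$ that \emph{individually} respect the remaining capacity of $S_{\le t}$ at both endpoints, and then assert that the corresponding set $N_0$ "forms a $b$-matching on the residual graph," so that greedy captures half its weight. But $N_0$ is only a capacitated matching with respect to the \emph{original} capacities (because it is a subset of what \opt\ selects in round $i$), not with respect to the residual capacities left by $S_{\le i-1}$. Concretely, if a vertex $v$ has capacity $C$, $S_{\le i-1}$ already occupies $C-1$ of its slots, and \opt\ selects $C$ new edges at $v$ in round $i$, then every one of those edges individually fits in the one remaining slot and so lands in your $\Aug_i$, yet only one of them can jointly augment $S_{\le i-1}$, and greedy can only be charged for that one. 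The ratio $\EE[\Aug_i]/\EE[S_i]$ therefore scales with $C(v)$ rather than a universal constant $c_1$, and the factor-revealing LP never gets off the ground. This is precisely the failure mode the paper isolates: in the capacitated setting, edge-by-edge augmentability does not imply joint augmentability, and defining $\Aug_i$ by joint feasibility instead is also not allowed, because that set would depend on information \stablematching\ does not have at the start of round $i$.

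The paper's way out is to abandon the $\Aug$/$\Adj$ dichotomy altogether and decompose $O_{\le t}$ by \emph{occupancy}: $\Occ$ is the set of \opt-edges with an endpoint filled to at least half its capacity by $S_{\le t}$ (charged to $S_{\le t}$ with a factor $4$), and $\Rem$ is the rest. For $\Rem_i$, a graph-theoretic claim extracts a subgraph of $N_0$ carrying at least $\nicefrac{1}{3}$ of the weight in which every vertex has degree at most $\lceil d_v/2 \rceil$; since both endpoints of every $\Rem$-edge are less than half-occupied, this subgraph \emph{jointly} fits into the residual capacities, and greedy then loses another factor $2$, giving $\EE[\Rem_i] \le 6\,\EE[S_i]$. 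The half-capacity threshold is the missing idea in your proposal: it is what converts individual feasibility into joint feasibility at the cost of a constant. Your charging-side instinct (that a saturated hub can absorb at most $C(v)$ adjacent \opt-edges, coverable by its $C(v)$ incident $S$-edges) is essentially sound and mirrors the paper's Generalized Charging Lemma, but without repairing the domination side the argument does not close.
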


Moreover, in many practical matching settings (e.g. gig economy applications, mentorship) matches occur between two sides of a market, and agents on one side of the market (e.g. jobs, mentees) only have capacity 1. In this setting, we can make use of the structure of the underlying graph to refine our approximation factor.

\begin{theorem} \label{thm:7-approx} 
In the many-to-one setting, the expected reward obtained by \stablematching\ is at least a $\nicefrac{1}{7}$ fraction of the expected reward of the optimal online algorithm. 
\end{theorem}

Next we consider a setting where for some set of agents $V$, each subset $S \subseteq V$ of size at most $k$ has an associated probability $p_S$ of being ``compatible" as a team. Generalizing the previous setting, in each of $T$ rounds, the agents are partitioned into teams so that each agent is in at most one team. We can equivalently think of the problem as constructing round-by-round matchings in a random hypergraph. We use modifications of the Domination and Charging Lemmas to show that the decentralized greedy formation of teams of constant size provides a constant-factor approximation to the optimal online algorithm. Formally, we show the following result, and provide a full proof in Appendix E.5.
% \ref{appendix:k_teams}.

\begin{theorem}
\label{thm:k_teams}
In a hypergraph where all hyperedges have cardinality at most $k$, the expected reward obtained by \stablematching\ is at least $\nicefrac{1}{2k}$ of the expected reward of \opt. 
\end{theorem}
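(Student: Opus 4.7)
My plan is to adapt the Domination Lemma and Charging Lemma to the bounded-cardinality hypergraph setting and combine them via direct summation, sidestepping the factor-revealing LP since we only need the weaker $\nicefrac{1}{2k}$ bound. I would reuse the notation of \Cref{sec:notation}: for a fixed round $t$, let $O_{\leq t}$ denote the successful hyperedges in \opt's round-$t$ matching, partition $O_{\leq t} = \Aug \sqcup \Adj$ based on whether a hyperedge is vertex-disjoint from $S_{\leq t}$, and further decompose $\Aug = \sqcup_{i \leq t} \Aug_i$ and $\Adj = \sqcup_{i, j \leq t} \Adj_i(S_j)$. Here $\Aug_i$ consists of hyperedges first selected by \opt\ in round $i$, and $\Adj_i(S_j)$ additionally requires incidence with $S_j$ but not with $S_{\leq j-1}$. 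Because \opt's round-$t$ selection is a hypergraph matching, $O_{\leq t}$ and all of the sets above are subsets of a single matching.

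The hypergraph Domination Lemma states $k \cdot \EE[S_i] \geq \EE[\Aug_i]$ for every $i \leq t$. Its proof follows \Cref{lemma:dominationlemma} verbatim: condition on histories $(h, h')$ induced by \stablematching\ and \opt\ through round $i-1$, and identify the set $N_0$ of hyperedges \opt\ newly picks in round $i$ that vertex-augment $S_{\leq i-1}$ and have probabilistically undetermined status. One still has $\EE[\Aug_i \mid \mathcal{G}_{h,h'}] \leq \sum_{e \in N_0} p_e$. The only substantive change is the greedy approximation step: \stablematching's greedy augmentation in round $i$ now runs on a hypergraph with edges of cardinality at most $k$, where greedy maximum weight matching is known to be a $\nicefrac{1}{k}$-approximation (each greedily chosen hyperedge precludes at most $k$ hyperedges of any alternative matching, each of weight no larger than its own). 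This yields $k \cdot \EE[S_i \mid \mathcal{G}_h] \geq \sum_{e \in N_0} p_e$, and the history-by-history aggregation goes through unchanged.

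The hypergraph Charging Lemma states $\sum_{i=1}^{t} \EE[\Adj_i(S_j)] \leq k \cdot \EE[S_j]$ for each $j \leq t$. Pointwise on a sample graph, charge each hyperedge in $\cup_i \Adj_i(S_j)$ to an incident hyperedge of $S_j$; since $O_{\leq t}$ is itself a hypergraph matching, each vertex of a hyperedge $f \in S_j$ lies in at most one hyperedge of $\cup_i \Adj_i(S_j)$, and so $f$ (with cardinality at most $k$) receives at most $k$ charges. Summing the two lemmas gives
\begin{equation*}
\EE[O_{\leq t}] \;=\; \sum_{i=1}^{t} \EE[\Aug_i] + \sum_{j=1}^{t} \sum_{i=1}^{t} \EE[\Adj_i(S_j)] \;\leq\; k \cdot \EE[S_{\leq t}] + k \cdot \EE[S_{\leq t}] \;=\; 2k \cdot \EE[S_{\leq t}].
\end{equation*}
Since this holds round-by-round, it extends to any nonnegative weighting $\omega_1, \ldots, \omega_T$, establishing the theorem.

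The main obstacle is ensuring that the hypergraph greedy $\nicefrac{1}{k}$-approximation interfaces cleanly with the history-conditional analysis of the Domination Lemma, specifically that the set $N_0$ remains a feasible augmentation of $S_{\leq i-1}$ in the hypergraph matching sense so that the greedy guarantee applies to it. The remainder of the proof is a mechanical generalization of the arguments behind \Cref{thrm:2.31}, and the factor $2k$ arises cleanly as the sum of the $k$-factor losses in the two lemmas.
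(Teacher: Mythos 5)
Your proposal is correct and follows essentially the same route as the paper: the paper's Appendix E.5 uses exactly this decomposition into $\{\Aug_i\}$ and $\{\Adj_i(S_j)\}$, a Hypergraph Charging Lemma giving $\sum_i \EE[\Adj_i(S_j)] \le k \cdot \EE[S_j]$ via the at-most-$k$-charges-per-hyperedge argument, and a Hypergraph Domination Lemma giving $\EE[\Aug_i] \le k \cdot \EE[S_i]$ by replacing the greedy $\nicefrac{1}{2}$-approximation with the $\nicefrac{1}{k}$-approximation for bounded-cardinality hypergraph matching, then sums the two bounds to obtain the factor $2k$.
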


%In the many-to-one setting, the expected reward obtained by the \stablematching\ algorithm is at least a ${1}/{4}$ fraction of the expected reward achieved by optimum online.

%The remainder of this section is devoted to the proof of this theorem. 

\subsection{What changes with capacities?}

We first provide some intuition for what differentiates the capacitated setting from the setting in Section~\ref{sec:model}, where agents can only be matched with one other agent every round. This intuition will motivate the new proof techniques used in the capacitated setting, and also highlight the nuances in the decomposition used to prove Theorem~\ref{thrm:2.31}. 

The difficulty in extending the techniques in Section~\ref{sec:model} lies in the definition of what it means to be an augmenting edge, which in turn determines the decomposition of the edges \opt\ selects into $\Aug$ and $\Adj$. A crucial property underlying the proof of the domination lemma is that in the non-capacitated setting, whether a matching $N$ augments another matching $N'$ can be determined by checking whether each $e\in N$ augments $N'$.\footnote{An edge $e$ augments $N'$ if and only if it doesn't share any endpoints with edges in $N'$, a matching $N$ augments $N'$ if and only if it doesn't share any endpoints with edges in $N'$.} Specifically, if each $e \in N$ individually can augment $N'$, and the edges in $N'$ form a matching, then \emph{all} edges in $N'$ can jointly augment $N$. For natural definitions of augmentation in the capacitated setting (e.g. $N$ augments $N'$ if $N \sqcup N'$ is a capacitated matching), a statement of this sort is no longer true. 

Hence, it is natural that in the capacitated case we might give a definition of $\Aug_i$ constructed by jointly considering a group of edges in $\Aug_i$ that can augment $S_{\le i-1}$, instead of considering whether edges in $\Aug_i$ could augment $S_{\le i-1}$ individually. In particular, we might define $\Aug_i$ to be a maximum subset of edges in $O_i$ which can augment $S_{\le i-1}$, and hope that we could claim $\EE[\Aug_i] \le C \cdot \EE[S_{\le i-1}]$ for some constant $C$. This natural approach unfortunately fails; the reason is that $\Aug_i$ can only be formed using information that \stablematching\ did not have available at the start of round $i$.\footnote{In the previous domination lemma, the fact that we could augment edge-by-edge let us argue that $\Aug_i$ could actually be bounded only using information that \stablematching\ had available at round $i$. This subtlety reinforces the need for the careful conditioning in the previous proof.}

Given that these natural generalizations to the augmentation approach fail, we instead decompose successful edges \opt\ selects based on a notion of occupancy for each agent. We say that a node is \emph{heavily occupied} if at least half their capacity is taken up by edges in $S_{\leq t}$. We decompose $O_{\le t}$, the successful edges that \opt\ selects in round $t$, into edges that are in $S_{\le t}$, edges that are not in $S_{\le t}$ but have at least one of their endpoints heavily occupied by $S_{\le t}$, and the remaining edges. Formally, we write 
$$O_{\le t} = (O_{\le t} \cap S_{\le t}) \sqcup \Occ \sqcup \Rem$$
where $\Occ$ denotes all edges in $O_{\le t} \setminus S_{\le t}$ with an endpoint that is filled to at least half-capacity from edges in $S_{\le t}$, and $\Rem$ denotes the remaining edges in $O_{\le t} \setminus (S_{\le t} \sqcup \Occ)$. 

Later in Subsection~\ref{Many-to-One Matchings}, we show a tighter analysis of this same decomposition technique for bipartite many-to-one matchings results in a better approximation factor. In both Sections~\ref{sec:8-approx} and ~\ref{Many-to-One Matchings}, we use a generalized version of the Charging Lemma to bound edges in $\Occ$ incident with nodes that are heavily occupied, and a generalized version of the Domination Lemma to bound edges in $\Rem$ incident only with nodes that are not heavily occupied.

%\newpage
\subsection{A $\nicefrac{1}{11}$-approximation for Capacitated Matchings on General Graphs}\label{sec:8-approx}

In this section, we prove Theorem ~\ref{thm:8-approx}, and show that \stablematching\ achieves at least a $\nicefrac{1}{11}$ fraction of the expected reward of \opt\ in the general capacitated setting. 

Recall that $\RoundAlgSucc{t}$ and $O_{\leq t}$ denote the  \emph{successful} edges selected by \stablematching\ and OPT, respectively, in round $t$.  The proof of Theorem~\ref{thm:8-approx} will follow from proving for all $t \in [T]$
\begin{equation}\label{eq:7-approx}
\mathbb{E}[O_{\leq t}] \leq 11 \cdot \mathbb{E}[S_{\leq t}].
\end{equation}

Recall that we decomposed $O_{\le t} = (O_{\le t} \cap S_{\le t}) \sqcup \Occ \sqcup \Rem$. This decomposition immediately lets us write a generalized Charging Lemma. 

\begin{lemma}[Generalized Charging Lemma] \label{lemma:charging_gen} We have $\EE[\Occ] \le 4 \cdot \EE[S_{\leq t}]$.
\end{lemma}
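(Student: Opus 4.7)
The plan is to mirror the proof of the original Charging Lemma (Lemma~\ref{lemma:charging2}) while accounting for vertex capacities through the definition of heavy occupancy. I would work sample-graph-by-sample-graph and take expectations only at the end.

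First, I would fix a sample graph and let $H = \{v \in V : \deg_{S_{\le t}}(v) \ge C(v)/2\}$ denote the set of heavily occupied vertices. By the definition of $\Occ$, every edge $e \in \Occ$ has at least one endpoint in $H$; I would assign each such edge to a single endpoint in $H$, breaking ties arbitrarily. Summing these charges over heavily occupied vertices then gives
$$|\Occ| \le \sum_{v \in H} \deg_{\Occ}(v),$$
where $\deg_{\Occ}(v)$ denotes the number of edges in $\Occ$ incident to $v$.

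The crucial step is to bound $\deg_{\Occ}(v)$ for each $v \in H$. Since $\Occ \subseteq O_{\le t}$ and OPT's selections respect the capacity constraints in each round, we have $\deg_{\Occ}(v) \le \deg_{O_{\le t}}(v) \le C(v)$. Combined with the defining inequality $C(v) \le 2\, \deg_{S_{\le t}}(v)$ of heavy occupancy, this yields $\deg_{\Occ}(v) \le 2\, \deg_{S_{\le t}}(v)$. Summing over $v \in H$ and using that each edge of $S_{\le t}$ contributes to exactly two vertex-degrees,
$$|\Occ| \le \sum_{v \in H} 2\, \deg_{S_{\le t}}(v) \le 2 \sum_{v \in V} \deg_{S_{\le t}}(v) = 4 |S_{\le t}|.$$
Taking expectations over sample graphs completes the proof.

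The main point requiring care—rather than difficulty—is the capacity inequality $\deg_{O_{\le t}}(v) \le C(v)$, which is the capacitated analog of the matching property invoked in the uncapacitated Charging Lemma (where $\cup_i \Adj_i(S_j)$ is observed to form a matching). This inequality is immediate under the convention that $O_{\le t}$ refers to the successful edges of OPT's round-$t$ capacitated matching, and the factor of 4 (as opposed to 2 in the uncapacitated case) arises precisely from the relaxation that allows a heavily occupied $v$ to have $C(v)/2$ of its capacity—rather than all of it—used by $S_{\le t}$.
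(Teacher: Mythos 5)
Your proof is correct and follows essentially the same argument as the paper's: charge each edge of $\Occ$ to a heavily occupied endpoint, bound the charge at each such vertex by its capacity (since $\Occ$ inherits the capacity constraint from OPT's round-$t$ matching), and then use $C(v) \le 2\deg_{S_{\le t}}(v)$ together with the fact that each edge of $S_{\le t}$ contributes to two vertex degrees to get the factor of $4$. The only difference is presentational—you make the degree sums explicit where the paper phrases the same accounting in terms of edges "paying charges."
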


\begin{proof} Fix any sample graph $G$. For each edge $e \in \Occ$, charge it to an endpoint of $e$ with at least half of its capacity occupied by edges in $S_{\leq t}$. As $\Occ$ forms a valid capacitated matching, the charge placed on each vertex is at most its capacity. Additionally, note that we can cover each of these charges by having every edge in $S_{\le t}$ pay two charges to each of its endpoints. The result follows by averaging over all sample graphs.
\end{proof}

The main claim for which we require different ideas is a generalized Domination Lemma.  Recall that $S_t$ and $O_t$ denote the \emph{new successful} edges selected in round $t$ by \stablematching\ and \opt, respectively. As before, we need to analyze the successful edges selected by \OPT\ based on the round they were discovered; with this in mind we write 
$$\Rem = \Rem_1 \sqcup \Rem_2 \sqcup \ldots \sqcup \Rem_t$$ where $\Rem_i$ denotes those edges in $\Rem$ which OPT selected for the first time in round $i$. The Generalized Domination Lemma bounds $\Rem_i$ against $S_i$. 
\begin{lemma}[Generalized Domination Lemma] \label{lemma:domination_gen} For all $i \le t$ we have $\EE[\Rem_i] \leq 6 \cdot \EE[ S_i ]$.
\end{lemma}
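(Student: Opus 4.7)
The plan is to adapt the history-conditioning framework from Lemma~\ref{lemma:dominationlemma} to the capacitated setting. Fix $i\le t$ and condition on a pair of histories $(h,h')$ for \stablematching\ and \OPT\ through round $i-1$; the goal is to show $\EE[\Rem_i\mid\mathcal{G}_{h,h'}]\le 6\,\EE[S_i\mid\mathcal{G}_h]$, after which averaging over $(h,h')$ recovers the lemma exactly as in the original proof. Given $(h,h')$, define $N$ to be the edges that \OPT\ first selects in round $i$ such that both endpoints $v$ satisfy $\deg_{S_{\le i-1}}(v)<C(v)/2$; this set is determined by $(h,h')$. Let $N_0\subseteq N$ consist of the edges absent from $h\cup h'$, so that conditioned on $\mathcal{G}_{h,h'}$ each $e\in N_0$ is successful independently with probability $p_e$. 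I would verify the containment $\Rem_i\subseteq \{e\in N_0 : X_e=1\}$: each $e\in\Rem_i$ is new to \OPT\ in round $i$ (so $e\notin h'$), is successful and absent from $S_{\le t}$ (so $e$ is neither a successful nor an unsuccessful edge in $h$), and has both endpoints light with respect to $S_{\le t}\supseteq S_{\le i-1}$ (so $e\in N$). This yields $\EE[\Rem_i\mid\mathcal{G}_{h,h'}]\le\sum_{e\in N_0}p_e$.

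The core of the argument is the lower bound $\EE[S_i\mid\mathcal{G}_h]\ge\tfrac{1}{6}\sum_{e\in N_0}p_e$. Set residual capacities $\tilde{b}_v:=C(v)-\deg_{S_{\le i-1}}(v)$. Two bounds hold at every vertex $v$ incident to $N_0$: $\tilde{b}_v>C(v)/2$ (by the definition of $N$) and $\deg_{N_0}(v)\le C(v)<2\tilde{b}_v$ (since $N_0$ lies inside \OPT's round-$i$ $b$-matching). I would then argue that for any greedy-by-weight maximal $b$-matching $M\subseteq N_0$ with respect to $\tilde{b}$, one has $\sum_{e\in M}p_e\ge\tfrac{1}{3}\sum_{e\in N_0}p_e$: every $e\in N_0\setminus M$ has a saturated endpoint $v(e)$ (with $\deg_M(v(e))=\tilde{b}_{v(e)}$) whose $\tilde{b}_{v(e)}$ incident edges in $M$ all have weight $\ge p_e$ by the greedy order, and since $\deg_{N_0\setminus M}(v)\le C(v)-\tilde{b}_v<\tilde{b}_v=\deg_M(v)$ at every saturated $v$, one can pair each such $e$ with a distinct edge $e'\in M$ at $v(e)$ with $p_{e'}\ge p_e$; summing and using that each $e'\in M$ is counted at most twice (once per endpoint) gives $\sum_{e\in N_0\setminus M}p_e\le 2\sum_{e\in M}p_e$. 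Because \stablematching's round-$i$ selection is itself a greedy $b$-matching on the residual graph (the edges of $N_0$ lie outside $h$ and hence are still active)---and therefore a $\tfrac{1}{2}$-approximation to the maximum-weight $b$-matching in that graph, which in particular dominates the maximum $b$-matching on $(V,N_0)$---we conclude $\EE[S_i\mid\mathcal{G}_h]\ge\tfrac{1}{2}\cdot\tfrac{1}{3}\sum_{e\in N_0}p_e$, completing the proof after averaging.

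The main obstacle will be the $\tfrac{1}{3}$ factor in the restricted $b$-matching bound on $(V,N_0)$. The engine driving it is the pair of inequalities $\tilde{b}_v>C(v)/2$ and $\deg_{N_0}(v)\le C(v)$, which together force $\deg_{N_0\setminus M}(v)<\deg_M(v)$ at every saturated vertex; this is precisely the combinatorial slack that the ``heavy occupation'' decomposition of $O_{\le t}$ is engineered to guarantee. As noted in the discussion preceding the lemma, naive edge-by-edge augmentation arguments fail in the capacitated setting because a union of individually augmenting edges need not form a valid $b$-matching, so it is important that the charging above is structural (via capacities, not via combining edges one at a time) and that it localizes to each saturated vertex independently.
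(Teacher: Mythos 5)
Your proof is correct, and its outer skeleton is identical to the paper's: the same conditioning on history pairs $(h,h')$, the same sets $N$ and $N_0$, the same containment giving $\EE[\Rem_i\mid\mathcal{G}_{h,h'}]\le\sum_{e\in N_0}p_e$, and the same final invocation of the $\nicefrac{1}{2}$-approximation of greedy for maximum-weight capacitated augmentation, yielding $6=2\times 3$. The one genuinely different ingredient is the combinatorial engine behind the $\nicefrac{1}{3}$ factor. The paper proves a standalone graph-theoretic claim --- every weighted graph contains a subgraph carrying at least $\nicefrac{1}{3}$ of the weight in which each vertex $v$ has degree at most $\lceil d_v/2\rceil$ --- via an explicit deletion process (add the heaviest remaining edge, then delete one extra incident edge at each of its endpoints), and then separately verifies that such a subgraph of $(V,N_0)$ fits into the residual capacities left by $S_{\le i-1}$. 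You instead run greedy maximal $b$-matching on $(V,N_0)$ directly with respect to the residual capacities $\tilde b_v=C(v)-\deg_{S_{\le i-1}}(v)$ and charge each rejected edge to a saturated endpoint, using $\deg_{N_0\setminus M}(v)\le C(v)-\tilde b_v<\tilde b_v=\deg_M(v)$ to obtain an injective assignment to heavier matched edges and hence $\sum_{e\in N_0\setminus M}p_e\le 2\sum_{e\in M}p_e$. Your route isolates exactly where the half-capacity condition from the $\Occ$/$\Rem$ decomposition is consumed and produces the feasible augmenting set in a single step; the paper's route buys a reusable, instance-independent graph claim at the cost of a separate feasibility check. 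Both correctly conclude $\EE[S_i\mid\mathcal{G}_h]\ge\frac{1}{6}\sum_{e\in N_0}p_e$.
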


\begin{proof}
Using the same notation in the previous proof of the Domination Lemma, for fixed histories $h$, $h'$ it suffices to show that $$\EE[\Rem_i \mid \mathcal{G}_{h, h'}] \le 6 \cdot \EE[S_i \mid \mathcal{G}_h].$$

Conditioned on our sample graph being in $\mathcal{G}_{h, h'}$, there is a fixed set $N$ of edges that OPT selects in round $i$ which are not in $S_{\le i-1}$, and furthermore have neither endpoint filled to half-capacity by edges in $S_{\le i-1}$. Using the same logic as before, we have that if $N_0 \subseteq N$ denotes those edges in $N$ disjoint from $h$ and $h'$, then $$\EE[\Rem_i \mid \mathcal{G}_{h, h'}] \le \sum_{e \in N_0} p_e$$
as $\Rem_i$ is some (random) subset of $N_0$. In contrast to previous settings, it is not the case that the set of edges in $N_0$ can feasibly augment $S_{\leq i-1}$. However, we show there exists a subset of edges from $N_0$ with at least one-third of the total weight which can feasibly augment $S_{\leq i-1}$. To prove this, we need the following graph theoretic claim.

\begin{claim}
In any weighted graph $G=(V,E)$ where each vertex $v$ has degree $d_v$, there exists a subgraph $S$ whose edges contain at least $\nicefrac{1}{3}$ of the total weight of all edges in $G$, such that each vertex $v$ has degree at most $\lceil \frac{d_v}{2} \rceil$ in $S$. 
\end{claim}

\begin{proof}
Consider the following algorithm for constructing $S$. Of all remaining edges in $E$, add the one of maximum weight to $S$. Call this edge $e = (u,v)$; remove $e$ from $G$. Furthermore, if $u$ has remaining incident edges in $G$, remove one (chosen arbitrarily). If $v$ has remaining incident edges in $G$, remove one (chosen arbitrarily). Continue
until $G$ has no edges remaining. 

Note that for any vertex $v$, there are at most $\lceil \frac{d_v}{2} \rceil$ edges of $S$ incident to $v$. Indeed, after adding any edge to $S$ that is incident to $v$, we delete an edge in $G$ incident to $v$ if possible. Note also that in each step, the weight of the edge we add to $S$ is at least the weight of each edge we delete. Hence in each step, we add to $S$ at least a $\nicefrac{1}{3}$ fraction of the weight of the edge we added and the edges we deleted. This holds in aggregate over all rounds. 
\end{proof}

To finish the proof of the lemma, we use this claim to find a subgraph $S$ of $(V, N_0)$ with at least $\nicefrac{1}{3}$ of the weight; we claim this subgraph is feasible to augment $S_{\leq i-1}$ (and can be selected by \stablematching\ in round $i$). Indeed, note it is disjoint from $h$, and each vertex $v$ is filled to at most $\lceil \frac{C(v)}{2} \rceil$ capacity in $S$ (because $N_0$ is a valid capacitated matching). Furthermore, every endpoint $v$ of an edge in $S$ is filled to at most $\lfloor \frac{C(v)}{2} \rfloor$ capacity by edges in $S_{\le i-1}$. Thus, $S$ is a feasible set of edges to augment $S_{\leq i-1}$ in round $i$. Well-known results also imply that in a fixed round, the \stablematching\ algorithm gives a $\nicefrac{1}{2}$-approximation to maximum weight augmenting edges (see, e.g., \cite{mestre2006greedy}). Hence, $$\EE[S_i \mid \mathcal{G}_h] \ge \sum_{e \in S} p_e \ge \frac{1}{6} \cdot \sum_{e \in N_0} p_e.$$ This demonstrates the result. 
\end{proof}

From  Lemma~\ref{lemma:domination_gen} we can see $$\EE[\Rem] = \sum_{i=1}^t \EE[\Rem_i] \le 6 \sum_{i=1}^t \EE[S_i] \le 6 \cdot \EE[S_{\le t}].$$ Hence, using both Lemmas~\ref{lemma:charging_gen} and Lemma~\ref{lemma:domination_gen} we have $$\EE[O_{\le t}] = \EE[O_{\le t} \cap S_{\le t}] + \EE[\Occ] + \EE[\Rem] \le 11 \cdot \EE[S_{\le t}]$$ 
which completes the proof of Theorem~\ref{thm:8-approx}.

\subsection{A $1/7$-approximation for Many-to-One Matchings}\label{Many-to-One Matchings}

In many practical matching applications, matches occur between two sides of a market, where agents on one side can accept multiple matches while agents on the other side can be matched at most once. For example, workers may take on multiple jobs which each only need one worker, and mentors are frequently matched with multiple mentees who each only have one mentor. In this section, we consider the performance of \stablematching\ in this more restricted setting of many-to-one bipartite matchings. In particular, we show our previous decomposition achieves a better approximation factor.

%In this section, we introduce a new setting motivated by the practical problem of matching students to mentors. This is a common task faced by many programs including matching medical students to doctors, incoming PhD students to mentors \ilcomment{I think this should just be many-to-one matching and not necessarily specifically `mentors to students'. Then it's much more general, and there are plenty more examples. We can still use terminology of students and mentors for the sets; I just changed the title from `student-mento matching' to `many-to-one matching'}. \my{I changed the text, and we only refer to them as agent $s \in S$ or agent $m \in M$. We can also drop this short student/mentor paragraph at the top.} %\tristancomment{My thought here is that it is good to emphasize a few different settings where this problem shows up (as there are many).}\my{ Sounds good:)}

%In this section, we consider a more restricted setting with capacities, where the graph $G(V,E)$ is bipartite and all agents with capacity greater than one are on the same side of the bipartition. The structure of this many-to-one bipartite setting allows us improve on our analysis of Greedy-Commit and show that it obtains a $\frac{1}{4}$-approximation to the optimal online algorithm. 

Formally, we assume our agents are broken into two disjoint sets $S \sqcup M$ (think ``students" and ``mentors"); for each pair $(s, m)$ where $s \in S$ and $m \in M$ we are given a ``success probability" $p_{s,m}$ representing the chance that an edge between $s$ and $m$ is realized when nature samples a random graph. Vertices in $S$ are on the left and vertices in $M$ are on the right. Each agent $m \in M$ has capacity $C(m) \in \mathbb{Z}_{> 0}$, and each agent $s\in S$ has capacity $C(s)=1$.  We show Theorem~\ref{thm:7-approx}, which we restate here.
%Formally, we are given a bipartite graph $G(V,E)$ with bipartition $V=S\sqcup M$. For any edge $e = (s,m)$ connecting agent $s\in S$ to agent $m \in M$, we have an associated success probability $p_{(s,m)}$ for that edge. %In practice, these probabilities might be given by our prior guess of the compatibility between specific student/mentor pairs. 
%In each of $T$ rounds, we select a valid capacitated matching; as before each edge in the matching we have never selected before will be successful with the appropriate probability. Our reward is the sum of the number of successful  edges in the matchings we select in each round.
%goal is to maximize the number of successful  edges we select across all rounds. 

\setcounter{theorem}{4}
\begin{theorem} \label{thm:7-approx} 
In the many-to-one setting, the expected reward obtained by \stablematching\ is at least a $\nicefrac{1}{7}$ fraction of the expected reward of the optimal online algorithm. 
\end{theorem}

\proof The proof will proceed by showing $\mathbb{E}[O_{\leq t}] \leq 7 \cdot \mathbb{E}[S_{\leq t}]$ for all $t \in [T]$. Partition the edges of $O_{\leq t}$ as
$$O_{\le t} =  \Occ \sqcup \Rem$$ where $\Occ$ denotes edges in $O_{\le t}$ whose left endpoint is incident to some edge in $S_{\le t}$, or whose right endpoint is occupied to at least half capacity by edges in $S_{\le t}$, and $\Rem$ denotes the remaining edges in $O_{\le t}$.\footnote{Note this is an extremely similar decomposition to that in the previous section. We do not have a separate category for edges in $O_{\le t}$ that are also in $S_{\le t}$, as every $e \in O_{\le t} \cap S_{\le t}$ is automatically in $\Occ$.}

\begin{lemma}[Many-to-one Charging Lemma] 
\label{lemma:charging_one_to_many}
We have $ \EE[\Occ] \le 3 \cdot \EE[S_{\le t}].$
\end{lemma}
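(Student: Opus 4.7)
The plan is to prove the lemma sample-graph by sample-graph via a charging argument, analogous in spirit to Lemma~\ref{lemma:charging_gen} but tuned to the bipartite many-to-one structure so as to save a factor compared to the general capacitated case. Fix an arbitrary realized sample graph $G$. For each edge $e = (s, m) \in \Occ$, I will charge $e$ to an edge of $S_{\le t}$ according to which clause of the definition of $\Occ$ it satisfies: if $s$ is incident to some edge of $S_{\le t}$, charge $e$ to that edge (Case 1); otherwise the right endpoint $m$ must be filled to at least half capacity by $S_{\le t}$, and I will charge $e$ to one of the $S_{\le t}$-edges incident to $m$ (Case 2), distributing the Case 2 charges carefully across those $S_{\le t}$-edges at $m$.

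The first step is to observe that the Case 1 charging is well-defined: since each student has capacity $1$, any student $s$ is in at most one edge of $S_{\le t}$, and the edge $e \in \Occ$ under consideration is the unique edge of $O_{\le t}$ with left endpoint $s$ (again by $C(s) = 1$). Consequently, each edge $e^{*} = (s^{*}, m^{*}) \in S_{\le t}$ receives at most one Case 1 charge, namely from the unique $O_{\le t}$-edge at $s^{*}$ (if it exists and lies in $\Occ$). The second step is to bound the Case 2 charges at a fixed mentor $m^{*} \in M$. Writing $k$ for the number of edges of $S_{\le t}$ incident to $m^{*}$ and $\ell$ for the number of edges of $\Occ$ incident to $m^{*}$ that fall under Case 2, validity of $O_{\le t}$ as a capacitated matching gives $\ell \le C(m^{*})$, while the Case 2 hypothesis forces $k \ge \lceil C(m^{*})/2 \rceil \ge C(m^{*})/2$. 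Hence $\ell \le 2k$, so the $\ell$ Case 2 charges at $m^{*}$ can be distributed among the $k$ edges of $S_{\le t}$ at $m^{*}$ so that no such edge is charged more than twice.

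Combining, each $e^{*} \in S_{\le t}$ receives at most $1 + 2 = 3$ charges, which yields the pointwise bound $|\Occ| \le 3\,|S_{\le t}|$ on every sample graph and hence $\EE[\Occ] \le 3\,\EE[S_{\le t}]$ after averaging. The only subtle point is disjointness of the two cases — an edge $e = (s,m) \in \Occ$ might in principle satisfy both conditions simultaneously — which is resolved by using Case 1 whenever it applies and deferring to Case 2 only otherwise, so each edge is charged exactly once. The main obstacle, and the reason this charging is stronger than the $4 \EE[S_{\le t}]$ bound in the general capacitated Lemma~\ref{lemma:charging_gen}, is precisely the inequality $\ell \le 2k$: this relies essentially on the asymmetric $C(s) = 1$ constraint, which prevents Case 2 edges from clustering at $m^{*}$ more than twice as densely as the committed $S_{\le t}$-edges there. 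In the general (both-sided capacities) setting that asymmetry is unavailable, which is what forced the looser constant.
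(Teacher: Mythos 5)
Your proof is correct and follows essentially the same charging argument as the paper: Case-1 charges go to the unique $S_{\le t}$-edge at the capacity-one left endpoint (at most one such charge per edge), and Case-2 charges at a right vertex $m^*$ are covered by the at least $C(m^*)/2$ incident $S_{\le t}$-edges, each absorbing at most two, for a total of three charges per edge of $S_{\le t}$. The only difference is that you spell out the counting $\ell \le C(m^*) \le 2k$ explicitly, which the paper leaves implicit.
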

\begin{proof}
Fix any sample graph. For each edge in $\Occ$, if it is adjacent to an edge in $S_{\le t}$ along its left endpoint, charge it to the unique edge in $S_{\le t}$ it is adjacent to. Otherwise, place a charge on its right endpoint. Each right vertex $v$ that is charged by this process is charged at most $C(v)$ times by this process, and has at least $C(v)/2$ edges in $S_{\le t}$ incident to it. So the charges on right vertices can be covered by charging each edge in $S_{\le t}$ at most twice. In all, we have charged each edge in $S_{\le t}$ at most three times; the result follows by averaging over all sample graphs. 
\end{proof}

\begin{lemma}[Many-to-one Domination Lemma]
\label{lemma:domination_ii_one_to_many}
For all $i \leq t$ we have $\EE[\Rem_i] \le 4 \cdot \EE[S_i].$
\end{lemma}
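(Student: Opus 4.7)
The plan is to mirror the proof of the Generalized Domination Lemma (Lemma~\ref{lemma:domination_gen}), with one replacement in the combinatorial step that exploits the fact that left (student) vertices have capacity one. As before, I would fix a pair of histories $(h, h')$ induced on \stablematching\ and \opt\ over rounds $1, \ldots, i-1$, and reduce to showing $\EE[\Rem_i \mid \mathcal{G}_{h,h'}] \le 4 \cdot \EE[S_i \mid \mathcal{G}_h]$ for each such pair. Define $N$ to be the (deterministic, given the conditioning) set of edges \opt\ selects in round $i$ whose left endpoint is unincident to $S_{\le i-1}$ and whose right endpoint is filled to less than half capacity by $S_{\le i-1}$, and let $N_0 \subseteq N$ consist of those edges disjoint from both histories. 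Because $S_{\le i-1} \subseteq S_{\le t}$, every edge of $\Rem_i$ satisfies these conditions, so $\Rem_i \subseteq N$, and edge-independence outside $h \cup h'$ gives $\EE[\Rem_i \mid \mathcal{G}_{h,h'}] \le \sum_{e \in N_0} p_e$.

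The key new ingredient is to produce a subset $S \subseteq N_0$ of total weight at least $\tfrac{1}{2}\sum_{e \in N_0} p_e$ that is feasible to augment $S_{\le i-1}$ in round $i$. Because every student has capacity one and is unincident to $S_{\le i-1}$ for any edge of $N_0$, left-side constraints are automatically met by any subset of $N_0$; the only binding constraint is at each mentor $m$, which may accept at most $C(m) - c_m$ new edges, where $c_m := |S_{\le i-1} \cap \delta(m)|$. By hypothesis $c_m < C(m)/2$, so $C(m) - c_m > C(m)/2 \ge |N_0 \cap \delta(m)|/2$. Hence, selecting at each mentor $m$ independently the top $\min\bigl(|N_0 \cap \delta(m)|,\; C(m) - c_m\bigr)$ edges of $N_0 \cap \delta(m)$ by probability retains strictly more than half the weight at $m$ (since the top-$k$ edges of $d_m$ sorted by weight always carry at least a $k/d_m$ fraction of the total weight), and therefore more than half the total weight of $N_0$. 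The resulting set $S$ is a valid many-to-one matching augmenting $S_{\le i-1}$ and is disjoint from $h$.

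Finally, I would invoke the standard $\nicefrac{1}{2}$-approximation of greedy for maximum-weight $b$-matching (as cited in Lemma~\ref{lemma:domination_gen} via \cite{mestre2006greedy}): since \stablematching\ in round $i$ greedily augments $S_{\le i-1}$ using edges of maximum probability and $S$ is a candidate augmenting set in the graph \stablematching\ sees, we obtain $\EE[S_i \mid \mathcal{G}_h] \ge \tfrac{1}{2}\sum_{e \in S} p_e \ge \tfrac{1}{4}\sum_{e \in N_0} p_e$. Chaining these inequalities and averaging over the histories yields $\EE[\Rem_i] \le 4 \cdot \EE[S_i]$.

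The main obstacle, as in the general capacitated setting, is the conditioning: one must ensure the combinatorial augmentation argument is carried out on the correct conditional sample space, which is why $N$ must be defined with respect to $S_{\le i-1}$ rather than $S_{\le t}$, and why $N_0$ is restricted to edges outside both histories (so that their realization probabilities are unchanged by the conditioning). The subgraph construction itself is cleaner than in Lemma~\ref{lemma:domination_gen} precisely because capacity-one left vertices let the selection decouple mentor-by-mentor, which is the structural reason the many-to-one case gives $\nicefrac{1}{7}$ rather than the $\nicefrac{1}{11}$ of general graphs.
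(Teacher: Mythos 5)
Your proposal is correct and follows essentially the same route as the paper's proof in Appendix E.4: condition on the pair of histories, bound $\EE[\Rem_i \mid \mathcal{G}_{h,h'}]$ by $\sum_{e \in N_0} p_e$, exhibit a feasible augmenting subset of $N_0$ carrying at least half its weight, and invoke the $\nicefrac{1}{2}$-approximation of greedy. Your mentor-by-mentor top-half selection is just a more explicit rendering of the paper's terse ``greedily augmenting vertex-by-vertex captures at least half the weight,'' so no further comparison is needed.
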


\proof The intuition behind this lemma is that at round $i$, there exists a feasible subset of the edges in $\Rem_i$ that augments $S_{\leq i-1}$, with at least half the total weight. As before, the greedy property of \stablematching\  algorithm gives a $\nicefrac{1}{2}$-approximation to maximum weight matching on a round-by-round basis. The rigorous proof follows the structure of previous domination lemmas very closely; details are available in Appendix E.4.
% \ref{appendix:domination_ii_one_to_many}.
\endproof

Combining \Cref{lemma:charging_one_to_many} and \Cref{lemma:domination_ii_one_to_many} gives $$\EE[O_{\le t}] =  \EE[\Occ] + \sum_{i=1}^t \EE[\Rem_i] \le 3 \cdot \EE[S_{\le t}] + \sum_{i=1}^t 4 \cdot \EE[S_i] = 7 \cdot \EE[S_{\le t}],$$ which completes the proof. 
\endproof

\section{Conclusion and Future Directions}\label{sec:conclusion}

This paper contributes to the literature on centralized and decentralized matching in platforms by providing an additional justification for focusing on decentralized algorithms. In particular, we consider matching platforms with repeated interactions between long-lived agents who have unknown but persistent preferences, such as gig economy applications, mentorship matching, kidney exchange, and team formation. We show that letting demand and supply myopically reach a stable matching in a decentralized manner approximates the outcome of computing and imposing a centralized matching. This is despite the fact that the centralized matching processes we consider are not incentive-compatible, and may even be NP-hard to compute.
% \st{This adds to the discussion of whether platforms should invest in improving and facilitating search functionality for agent in the market to achieve better decentralized matches  or improving a centralized matching.}
 %This holds whether the platform's objective is to maximize the size of the matching at the last period, maximize the cumulative matching size, or maximize any weighted sum of the number of successful matches in each round.

We focused on a setting for matching with stochastic rewards and learning dynamics. While this setting has been studied in the query-commit literature, the primary motivation in that literature was failure-aware kidney exchange, and the primary focus of that literature was on querying individual edges in sequence to optimize some last-round objective. Beyond kidney exchange, this setting reflects natural learning dynamics that are present in a wide variety of markets that repeatedly match the same agents and provide persistent rewards, such as gig economy job markets, mentorship programs and team formation, and our focus on querying \emph{matchings} and maximizing rewards received \emph{in all rounds} is motivated by these settings. Further work can be done to identify when techniques from one approach can be transferred to the other. 

Another difference from the query-commit literature is that we ask how a decentralized greedy algorithm compares to an optimal online algorithm that is \emph{not restricted to commit} to past successes. While we show that \opt-Commit is a 2-approximation to \opt, further characterization of the relationship between \opt and \opt-Commit remains open. 
In addition, our results hold for any objective that takes a weighted sum of rewards across rounds. This is because the analysis throughout the paper was performed on a round-by-round basis.
Objectives of potential interest included the sum of rewards across different rounds, expected discounted rewards, as well as the size of maximum successful matching that can be identified at the end of $T$ rounds, and we believe our analysis can be tightened for these specific objectives to produce sharper bounds. We also believe that there are sharper bounds for the capacitated settings. We leave the study of tighter approximation guarantees to future work.
    
%In this setting, we introduce the Greedy-Commit algorithm, which takes a myopic exploit-explore approach to learning by first committing to (exploiting) edges that have been observed to be successful, and then greedily matching (exploring) the remaining agents. We compare Greedy-Commit to the optimal online algorithm and show it gives a constant-factor approximation. Our analysis makes use of a delicate coupling analysis, that compares edges in Greedy-Commit to OPT either via the Domination Lemma or the Charging Lemma.
    
%We presented the results of the paper for the setting in which the objective is to maximize the sum of the rewards across different rounds. Our analysis compares the reward of Greedy-Commit to OPT on a round-by-round basis and therefore 
%Our results hold for any objective that takes a weighted sum of rewards across rounds. This is because the analysis throughout the paper was performed on a round-by-round basis. 
%For our motivating objective of maximizing the total expected reward over all rounds of matching, we show via a factor-revealing LP that Greedy-Commit provides a 0.43-approximation. 
%Objectives of potential interest included the sum of rewards across different rounds, expected discounted rewards, as well as the size of maximum successful matching that can be identified at the end of $T$ rounds, and we believe our analysis can be tightened for these specific objectives to produce sharper bounds. We leave the study of tighter approximation guarantees to future work.
         
More broadly, our findings raise a number of follow-up questions. Can platforms be designed in a way to nudge agents towards a decentralized outcome that approximates the optimal centralized matching achievable by the platform? What is the approximation gap between incentive-compatible matching platforms and what is achievable by a centralized authority? This paper also explores the problem of learning through prior assignments, a natural direction that is relatively less studied in the literature. Variations on this theme merit future exploration; for example, matching programs in highly relational and idiosyncratic settings like mentorship matching frequently face a \emph{cold start} problem, where the program designer has some prior over which features best predict a successful match, and can refine this prior by observing the outcomes of prior matches. 
In mentorship or rotation programs an additional feature is that edge successes are often \emph{correlated}; for example a student may think she has a general interest in economics and computation but learn that she is more interested in theory generally. 
%The dynamics of this paper are also very relevant for team formation and other combinatorial problems beyond the pairwise matching setting. 
We hope that this paper will motivate future work in these directions.

\bibliographystyle{plain}
\bibliography{references}

\newpage
\appendix

\section{Optimum Offline vs Optimum Online}\label{appendix:optoffline}

To see that approximation to the optimal offline algorithm can be arbitrarily bad, consider a bipartite graph $K_{n,n}$ where each edge has probability $\nicefrac{1}{n}$ of being successful, and where we only have one round. Clearly no online algorithm gets expectation better than $1$. However, the optimal offline algorithm can simply see the realization of each edge, and then select the maximum matching; we claim that the expected size of this maximum matching is has size $\Theta(n)$. 

We give a loose bound; observe that for any edge $e$, the probability it is included in the maximum matching of the realized graph is at least the probability that $e$ is realized and no edges adjacent to it are realized. In particular, this probability is at least $\frac{1}{n} \cdot \left( 1 - \frac{1}{n} \right)^{2n-2} \ge \frac{1}{n} \cdot \frac{1}{e^2}$. By linearity of expectation, the expected size of the maximum matching in the realized graph is hence at least $\frac{n}{e^2}$. 

\section{The \greedycommit\ Algorithm}\label{app:greedycommit}
Here we provide a formal description of the \greedycommit\ algorithm.

 \begin{tcolorbox}
 
 \textbf{The Greedy-Commit Algorithm}
 
% \emph{Input}: A graph $G = (V, E)$ and edge realization probabilities $\{p_e\}_{e \in E}$. \\
% \emph{Initialization}: Start the matching process from the first round, by setting $t = 1$. 
Initialize the set of successful edges $A \leftarrow \emptyset$. Let $S \leftarrow V.$ 

For rounds $1 \leq t \leq T$:
\begin{itemize} 
    \item Find a matching $M_t$ between vertices in $S$ maximizing $\sum_{e\in M_t} p_e.$ 
    \item Output $M_t \cup A$  as the selected matching for round $t$.
    \item If an edge $e$ in $M_t$ is successful, then add it to $A$ and remove both its endpoints from $S$.  Otherwise, set $p_e \leftarrow 0$.
\end{itemize}

\end{tcolorbox}

\section{\OPT-Commit vs \OPT}\label{appendix:greedy-vs-greedy-commit}

% \ilcomment{To be honest, I don't think we need this example (or we can fit it in one line in the footnote)}

Recall that \OPT-Commit is the optimal matching algorithm subject to the restriction that any edges it matches that are successful must also be matched in all future rounds. We first demonstrate that the \OPT-Commit algorithm is not the same as \opt. 

\begin{proposition}
\OPT-Commit algorithm is not the same as \opt.
\end{proposition}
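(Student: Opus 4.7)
The plan is to exhibit an explicit small instance on which \opt\ strictly outperforms \opt-Commit. Consider $V=\{a,b,c,d\}$, $T=2$ rounds, and unit weights $\omega_1=\omega_2=1$, with edge probabilities $p_{\{a,b\}}=p_{\{b,d\}}=1$, $p_{\{a,c\}}=p_{\{c,d\}}=1/2$, and all remaining edges probability $0$. The only nonzero edges form a $4$-cycle $a$-$c$-$d$-$b$-$a$, with two perfect matchings of interest: the ``deterministic--random'' matching $\{(a,b),(c,d)\}$ and its symmetric counterpart $\{(a,c),(b,d)\}$.

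I would first upper-bound \opt-Commit by enumerating its possible round-$1$ matchings. Up to symmetry, the relevant candidates are $\{(a,b),(c,d)\}$, $\{(a,c),(b,d)\}$, and the four single-edge matchings (any other choice uses only probability-$0$ edges and is dominated). For each candidate I would condition on the realizations of the probability-$1/2$ edges, identify the resulting committed set, and take the best round-$2$ matching containing that committed set. The best such option is $\{(a,b),(c,d)\}$ in round $1$: it yields expected reward $3/2$ in round $1$, and in the event that $(c,d)$ fails, \opt-Commit must still play $(a,b)$ in round $2$ while $c,d$ have no remaining positive-probability edge, so round-$2$ expected reward is $\frac{1}{2}\cdot 2 + \frac{1}{2}\cdot 1 = 3/2$. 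This gives total $3$, and a brief check of the symmetric and single-edge strategies confirms this is the maximum.

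I would then exhibit a non-committing strategy for \opt\ that achieves strictly more. Namely, in round $1$ play $\{(a,b),(c,d)\}$ as before, but in round $2$, if $(c,d)$ failed, \emph{drop} the committed edge $(a,b)$ and instead play $\{(a,c),(b,d)\}$, whose expected reward is $p_{\{a,c\}}+p_{\{b,d\}}=1/2+1=3/2$. If $(c,d)$ succeeded, simply replay $\{(a,b),(c,d)\}$ for a reward of $2$. The total expected reward is $\frac{3}{2} + \frac{1}{2}\cdot 2 + \frac{1}{2}\cdot\frac{3}{2} = \frac{13}{4}$, strictly larger than $3$.

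The only (mild) obstacle is confirming that no alternative strategy for \opt-Commit reaches $13/4$. Since the graph has only four vertices and four positive-probability edges on a $4$-cycle, the enumeration is routine: up to symmetry one must check at most a handful of round-$1$ matchings, and for each the conditional round-$2$ reward is a simple deterministic function of the random outcomes and can be computed by hand.
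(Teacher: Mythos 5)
Your proposal is correct and takes essentially the same approach as the paper: both exhibit a small two-round instance on four vertices in which a partial success in round~1 forces \opt-Commit to keep a committed edge while \opt\ profitably rematches (the paper uses $K_{2,2}$ with all probabilities $0.7$, yielding $1.4$ vs.\ $1$ in the second round; you use a $4$-cycle with two probability-$1$ and two probability-$\nicefrac{1}{2}$ edges, yielding $\nicefrac{13}{4}$ vs.\ $3$ overall). Your enumeration of \opt-Commit's round-$1$ options checks out, so the extra bookkeeping you flag is routine and the argument is complete.
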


\begin{proof}
Consider a complete bipartite graph with bipartition $(\{u_1,u_2\},\{u_3,u_4\})$ such that each edge is realized with probability $0.7$; say we are matching over two rounds. We show that \OPT-Commit and \OPT\ perform differently on this problem instance.

In the first round, \OPT\ and \OPT-Commit both select a matching of size 2. Without loss of generality we assume they both select edges $(u_1,u_3)$ and $(u_2,u_4)$ in the first round. If both $(u_1,u_3)$ and $(u_2,u_4)$ are realized, or if neither $(u_1,u_3)$ nor $(u_2,u_4)$ is realized, both algorithms will behave identically in the second round.

However, if only one edge is successful (say, $(u_1,u_3)$), \OPT-Commit selects only edge $(u_1,u_3)$ in the second round, receiving reward 1 for the second round, while \opt\ selects edges $(u_1,u_4)$ and $(u_2,u_3)$ in the second round, hence receiving expected reward $2 \cdot 0.7 = 1.4$ in this round.
 Hence \opt\ and \opt-Commit behave differently, and the expected reward achieved by \opt\ can be higher than that achieved by \opt-Commit. Note that the exact same example additionally shows that the \textsc{Greedy} algorithm, which queries the matching with the highest expected reward in each round, is different from the \greedycommit\ algorithm.
 \end{proof}
 
Next we prove Proposition~\ref{prop:opt-optcommit} from Section~\ref{sec:upperbound}, which shows that \OPT-Commit is a $\nicefrac{1}{2}$-approximation to \OPT. We restate the proposition here.
    
\setcounter{proposition}{1}
\begin{proposition}
The expected reward of \opt-commit is at least a $\nicefrac{1}{2}$-approximation to \opt.
\end{proposition}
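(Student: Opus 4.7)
The plan is to construct a committing algorithm $\mathcal{A}$ whose expected reward in every round is at least $\tfrac{1}{2}$ of \opt's expected round reward; since \opt-Commit is by definition the optimal committing online algorithm, this yields the proposition. Define $\mathcal{A}$ as follows. Let $A_{\le i-1}$ denote the successful edges that $\mathcal{A}$ matched in rounds $1, \ldots, i-1$, to which $\mathcal{A}$ is committed. In round $i$, $\mathcal{A}$ simulates \opt\ to obtain its round-$i$ matching $M_i^{\opt}$, and then selects the matching $A_{\le i-1} \cup N_i$, where $N_i \subseteq M_i^{\opt}$ consists of the edges of $M_i^{\opt}$ that are vertex-disjoint from $A_{\le i-1}$ (so $A_{\le i-1} \cup N_i$ is a valid matching containing $A_{\le i-1}$); $\mathcal{A}$ then appends the successful edges of $N_i$ to its committed set.

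The main step is a per-round, per-sample-graph reward comparison. Fix a realization $G$ of the random graph, and partition the edges of $M_i^{\opt}$ into three classes: (i) edges in $A_{\le i-1}$, (ii) edges in $N_i$, and (iii) all remaining edges, each of which necessarily shares an endpoint with some edge of $A_{\le i-1}$. Classes (i) and (ii) contribute identically to $\mathcal{A}$'s and \opt's round-$i$ rewards. To bound class (iii), observe that $A_{\le i-1}$ is itself a matching with exactly $2|A_{\le i-1}|$ distinct endpoints, and that $M_i^{\opt}$ being a matching implies each such endpoint is incident to at most one edge of $M_i^{\opt}$; since a class-(i) edge uses two such endpoints while a class-(iii) edge uses at least one, counting incidences gives $2|(\text{i})| + |(\text{iii})| \le 2|A_{\le i-1}|$, and so $|(\text{i})| + |(\text{iii})| \le 2|A_{\le i-1}|$. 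Using that every edge of $A_{\le i-1}$ is deterministically successful then yields
\begin{equation*}
\#\text{successful in } M_i^{\opt} \;\le\; |(\text{i})| + \#\text{successful in } N_i + |(\text{iii})| \;\le\; 2\bigl(|A_{\le i-1}| + \#\text{successful in } N_i\bigr),
\end{equation*}
which is exactly twice $\mathcal{A}$'s round-$i$ reward. Taking expectations and summing over $i$ with the non-negative weights $\omega_i$ delivers the desired half-approximation.

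The main obstacle to treat carefully is that $\mathcal{A}$ must determine $M_i^{\opt}$ online, yet $\mathcal{A}$'s own observations are a strict subset of \opt's (since $\mathcal{A}$ only queries $N_i$, not the class-(iii) edges of $M_i^{\opt}$, whose outcomes affect \opt's later decisions). I would address this by coupling $\mathcal{A}$ to \opt\ through the common underlying sample graph $G$: the per-round inequality is an almost-sure bound on $G$, so $\mathcal{A}$ can be realized as a committing policy that treats \opt\ as a deterministic oracle keyed to $G$, and since \opt-Commit is optimal over all committing policies with at least this much information, it inherits the bound in expectation.
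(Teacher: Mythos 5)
Your proposal is correct and follows essentially the same route as the paper's Appendix~C proof: you construct the same committing algorithm $\mathcal{A}$ that augments its committed successful edges with the conflict-free portion of \opt's round-$i$ matching, and establish the same per-sample-graph incidence count over the endpoints of the committed matching to show \opt's round-$i$ successful edges number at most twice $\mathcal{A}$'s. The simulation subtlety you flag (that $\mathcal{A}$ never observes the outcomes of \opt's conflicting edges) is glossed over by the paper as well; the clean fix is to have $\mathcal{A}$ feed the simulated \opt\ fresh independent coin flips for any edge it never queries, which, by independence of edge realizations, preserves the distribution of \opt's trajectory while keeping the pointwise inequality on the coupled space, so your argument is at least as careful as the paper's on this point.
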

\setcounter{proposition}{3}
\begin{proof}
We consider the algorithm $\mathcal{A}$, which in round $i$, selects all the successful edges it has found so far, as well as the edges that \opt\ selects in round $i$ which can augment this matching. For any set of edges $E$, we let $N(E)$ denote the set of all vertices incident to at least one edge in $E$. 

Let $S_{\leq t}$ and $O_{\leq t}$ denote the successful edges selected by $\mathcal{A}$ and OPT respectively in round $t$. The proof proceeds by showing $\EE[O_{\leq t}] \leq 2 \cdot \EE[S_{\leq t}]$ for all $t \in [T]$; the result will follow as clearly OPT-commit performs at least as well as $\mathcal{A}$.

Let $S_i^+$ and $O_i^{+}$ denote the \textit{new} edges selected by $\mathcal{A}$ and OPT respectively in round $i$, and let $S_i$ and $O_i$ denote the \textit{successful} edges in $S_i^+$ and $O_i^{+}$ respectively. Fix any sample graph $G$. For each round $i$, observe that the new edges $\mathcal{A}$ selects in round $i$ are precisely 
$$S_i^{+} := O_i^+ \setminus \left \{ e \in O_i^+ : N( \{ e \}) \cap N(S_1) \sqcup N(S_2) \sqcup \cdots \sqcup N(S_{i-1}) \neq \emptyset \right \}.$$ Note that each successful edge OPT selects in round $t$ is either selected by $\mathcal{A}$ as well, in which case $e \in S_{\leq t}$, or edge $e$ is adjacent to a node in $N(S_1) \sqcup N(S_2) \sqcup \cdots \sqcup N(S_{t})$. (If edge $e$ was selected for the first time by OPT in round $j$, and if edge $e$ was not adjacent to any successful edge selected by $\mathcal{A}$ up to but not including round $j$, $e$ would've been selected in round $j$ by $\mathcal{A}$ as well.) As the edges of $O_{\leq t}$ form a valid matching, for any fixed sample graph $|O_{\leq t}| \leq 2 \cdot |S_{\leq t}|$. Averaging over all sample graphs, $\EE[O_{\le t}] \le 2 \cdot \EE[S_{\le t}]$. 
\end{proof}

% \section{Comparing \stablematching\ with \opt}

\section{Computing \opt\ is NP-hard} \label{appendix:optnphard}

In this section, for completeness, we review the reduction \cite{chen2009approximating} use to show that computing OPT-commit is NP-hard. We briefly note that the same ideas show that computing OPT is NP-hard. 

Chen et al. reduce from the problem of determining whether a graph $G = (V, E)$ is $k$-edge-colorable. Given a graph $G=(V,E)$ with $m = |E|$, they construct a stochastic matching instance on $G$ over $k$ rounds where each each edge has probability $\frac{1}{m^3}$ of being realized; our objective is the total number of successful edges queried on the $k$\textsuperscript{th} round. If $G$ is $k$-edge-colorable, a feasible strategy is to commit to all the successful edges we have seen thus far, and augment with all edges of color $i$ that we can in the $i$\textsuperscript{th} round. The expected reward of \opt\ in this case is at least $$ \sum_{e \in E} \frac{1}{m^3} - \binom{m}{2} \left( \frac{1}{m^3} \right)^2 \ge \frac{1}{m^2} - \frac{1}{m^4}.$$ Indeed, note that $\sum_{e \in E} \frac{1}{m^3}$ gives the expected number of total successful edges found if we simply queried all edges of color $i$ in round $i$. The expected number of edges this overcounts compared to our actual strategy is at most the expected number of pairs of edges that are adjacent, which is at most $\binom{m}{2} \left( \frac{1}{m^3} \right)^2$. 

If $G$ is not $k$-edge-colorable, in $k$ rounds we know at most $m-1$ total edges can be queried. Hence the expected reward of OPT would certainly be at most $$ (m-1) \cdot \frac{1}{m^3} = \frac{1}{m^2} - \frac{1}{m^3}$$ as this upper bounds the number of distinct successful edges \opt\ could find across all rounds. Hence, computing the expected reward of OPT suffices to determine whether $G$ is $k$-edge-colorable. 

\section{Additional Omitted Proofs}

\subsection{Proof of Lemma \ref{lemma:domination} (Refined Domination Lemma)} \label{refineddomlemmaproof}
\begin{proof}
The edges in $$\Aug_i \sqcup \Adj_i(S_j) \sqcup \Adj_i(S_{j+1}) \sqcup \ldots \sqcup \Adj_i(S_t)$$ are all selected by OPT for the first time in round $i$, and form a matching. Furthermore, they can augment $S_{\le j-1}$. For convenience, denote these edges by $M$. 

We only require small changes from the proof of the Domination Lemma in the previous section. In particular, let $h$ be any possible history on the first $j-1$ rounds, and let $h'$ be any possible history on the first $i-1$ rounds. If $\mathcal{G}_h$ denotes the set of all base graphs $G$ which induce a history of $h$ on \stablematching\ (for $j-1$ rounds), and $\mathcal{G}_{h,h'}$ denotes the subset of those that additionally induced a history of $h'$ on OPT (for $i-1$ rounds), it suffices to show that $$2 \cdot \EE[S_j \mid \mathcal{G}_h] \ge \EE [ M \mid \mathcal{G}_{h, h'}].$$ 

Conditioned on our sample graph being in $\mathcal{G}_{h, h'}$, there is a fixed set $N$ of edges that OPT selects in round $i$ which can augment $S_{\le j-1}$. Using the same logic as before, we have that if $N_0 \subseteq N$ denotes those edges in $N$ disjoint from $h$ and $h'$, then $$\EE[M \mid \mathcal{G}_{h, h'}] \le \sum_{e \in N_0} p_e,$$ as $M$ is clearly a subset of the edges that OPT selects in round $i$ which can augment $S_{\le j-1}$. We also note that \stablematching\ in round $j$ selects edges with at least half the total weight of those in $N_0$ (they are all disjoint from $h$), so $$2 \cdot \EE[S_j \mid \mathcal{G}_h] \ge \sum_{e \in N_0} p_e.$$ This demonstrates the result. \end{proof}

\subsection{Proof of \Cref{thrm:greedy-commit} (lower bound for \greedycommit)}\label{greedycommitanalysis}

Here we prove that \greedycommit\ is at least a $0.43$-approximation to \opt. As the proof is very similar to the proof that \stablematching\ is a $0.316$-approximation, we only mention the key places where the proof differs. When analyzing \greedycommit\, we let $S_{\le t}$ denote the successful edges that \greedycommit\ selected in round $t$, and let $S_t$ denote the new successful edges that \greedycommit\ selected in round $t$. All other sets of edges are defined verbatim as before. 

The Charging Lemma, which stated that $\sum_{i=1}^t \EE[\Adj_i(S_j)] \le 2 \cdot \EE[S_j]$, holds with the same proof. However, we prove a stronger version of the Refined Domination Lemma. In particular, we show that for any $1 \le i, j \le t$ we have $$\EE[\Aug_i] + \EE[\Adj_i(S_j)] + \EE[\Adj_i(S_{j+1})] + \ldots + \EE[\Adj_i(S_t)] \le \EE[S_j].$$ The proof in \Cref{refineddomlemmaproof} holds verbatim, with the exception of penultimate sentence. Instead, we claim that \greedycommit\ in round $j$ selects edges with at least the total weight of those in $N_0$ (they are all disjoint from $h$, and hence \greedycommit\ selects something at least as good as $N_0$), so $$\EE[S_i \mid \mathcal{G}_h] \ge \sum_{e \in N_0} p_e.$$
This proves a strengthened domination lemma. With these lemmas in place, we get the following factor-revealing LP, following the techniques we used to analyze \stablematching\ against \opt. 

\vspace{0.3em}

\textbf{Factor-revealing LP (Primal)}
\setcounter{equation}{0}
\begin{align}
\nonumber \max\;\;\;\;\;\; \sum_{i = 1}^{t} X_{i} +  \sum_{i = 1}^{t}  \sum_{j = 1}^{t} & X_{i,j} &&\text{ }  \\
\text{such that \;\;\;\;\;\;\;}
X_{i} + \sum_{q = j}^{t} X_{i,q} &\leq Y_j \;\; &&\text{for all } 1 \le i, j \le t  \\
\sum_{i = 1}^{t} X_{i,j} &\leq 2 Y_j \;\; &&\text{for all } 1 \le j \le t \\
\sum_{j = 1}^{t} Y_{j} &\leq 1 \;\; &&\text{  }\\
 Y_j \geq 0, \; X_{i} \geq 0, \; & X_{i,j} \geq 0  \;\; &&\text{for all }  1 \le i,j \le t 
\end{align}

To give a bound on the maximum value obtained by this LP, we take the dual, noting that by weak duality it suffices to analyze the value obtained by a specific feasible solution. The dual of our LP is given below. 

\textbf{Factor-revealing LP (Dual)}
\setcounter{equation}{0}
\begin{align}
\nonumber \min\;\;\;\;\;\;  u \hfill  && \text{ } \\
\text{such that \;\;\;\;\;\;\;}
\sum_{q = 1}^{j} F_{i,q} + c_j &\geq 1 \;\; &&\text{for all } 1 \le i, j \le t \\
\sum_{i = 1}^{t} F_{i,j} + 2c_j &\leq u \;\; &&\text{for all } 1 \le j \le t \\
\sum_{j = 1}^{t} F_{i,j} &\geq 1 \;\; &&\text{for all } 1 \le i \le t \\
 u \geq 0, \; c_{j} \geq 0, \; & F_{i,j} \geq 0  \;\; &&\text{for all } 1 \le i, j \le t
\end{align}

Setting the dual variables as
$$F_{i,j} = \frac{2}{t(e^2 - 1)} \cdot \left( \frac{t}{t-2} \right)^{j-1} \text{\;\;\; for } 1 \le i, j \le t $$

% $$c_j = 1 - \frac{2}{t(e^2 - 1)} \times \frac{(\frac{t}{t-2})^j - 1}{\frac{t}{t-2} - 1} =  1 - \frac{t-2}{t(e^2 - 1)} \times \Big( (\frac{t}{t-2})^j - 1\Big)$$
$$ c_j =   1 - \frac{t^j - (t-2)^j}{t(t-2)^{j-1}(e^2 - 1)} 
\text{\;\;\; for } 1 \le j \le t $$
$$ u= \max_{j \in [t] } \left \{  \sum_i F_{i,j}+2c_j  \right \} = 2 + \frac{2(t-2)^t}{2(t-2)^{t-1} + t^t - t(t-2)^{t-1}},$$ results in a feasible dual solution, where the objective value $u(t)$ is increasing in $t$. Moreover $$\lim_{t\rightarrow\infty} \left( 2 + \frac{2(t-2)^t}{2(t-2)^{t-1} + t^t - t(t-2)^{t-1}} \right) = 2 + \frac{2}{\mathrm{e}^2-1}.$$

Hence by weak duality, the objective value of our primal is at most $2 + \frac{2}{e^2-1}$. It follows that the expected reward of Greedy-Commit in round $t$ is at least a $ \left( 2 + \frac{2}{e^2-1} \right)^{-1} \ge 0.43$ fraction of the expected reward of OPT in round $t$.

\subsection{Proof of Lemma~\ref{lem:UB} (upper bound for \stablematching\ and \greedycommit)}
 Here we illustrate the graph $G_{n\epsilon}$ used to  show that \stablematching\ and \greedycommit\ are at best a $\nicefrac{1}{2}$ approximation for \opt.
 \begin{figure}[H]
%  \centering
  \includegraphics[width=6cm]{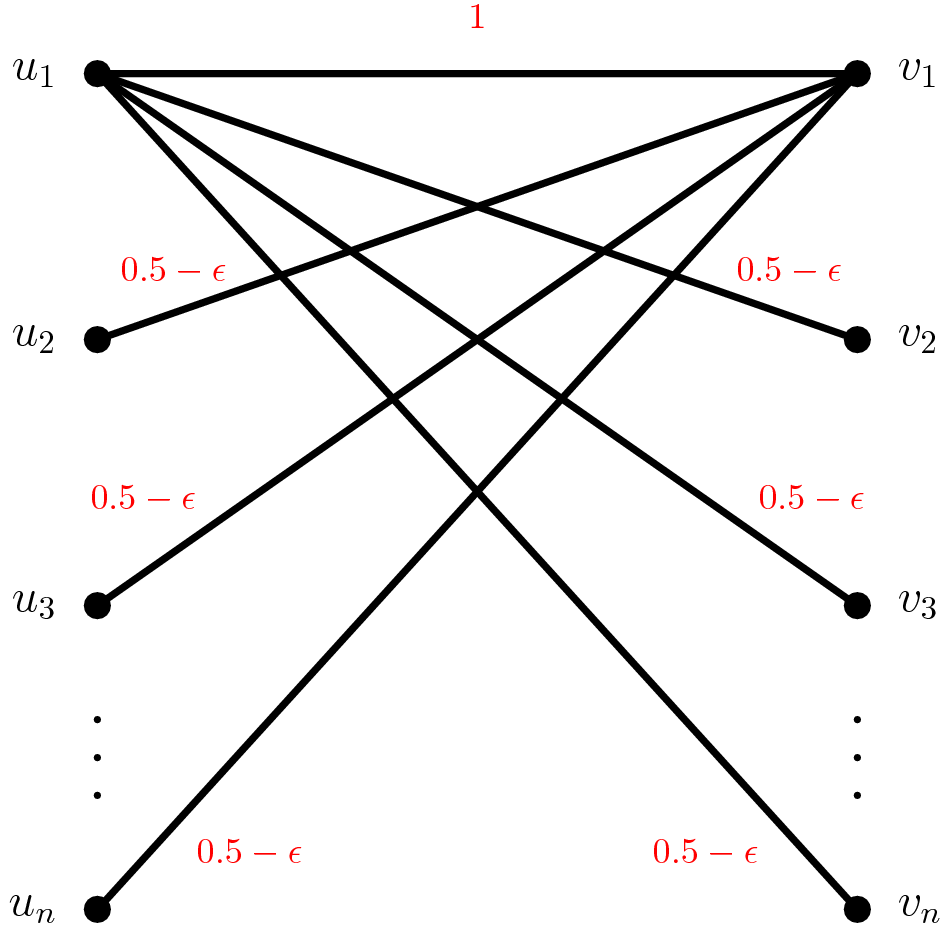}
  \caption{Graph $G_{n, \epsilon}$ with edge realization probabilities}
  \label{fig:Greedy2approx}
\end{figure}

\subsection{Proof of Lemma \ref{lemma:domination_ii_one_to_many} (Many-to-one Domination Lemma)}\label{appendix:domination_ii_one_to_many}

\begin{proof}
Let $h$ and $h'$ denote any two fixed histories on $i-1$ rounds. It suffices to show that 
$$\EE[\Rem_i \mid \mathcal{G}_{h, h'}] \le 4 \cdot \EE[S_i \mid \mathcal{G}_{h, \cdot}].$$ Conditioned on our sample graph being in $\mathcal{G}_{h, h'}$, there is a fixed set $N$ of edges that OPT selects in round $i$ which are not in $S_{\le i-1}$, and furthermore have their left endpoints unoccupied by $S_{\le i-1}$, and their right endpoints only occupied to at most half capacity by $S_{\le i-1}$. If $N_0 \subseteq N$ denotes those edges in $N$ disjoint from $h$ and $h'$, then 
$$\EE[\Rem_i \mid \mathcal{G}_{h, h'}] \le \sum_{e \in N_0} p_e,$$
as $\Rem_i$ is some (random) subset of $N_0$. In contrast to the previous setting, it is not the case that all edges in $N_0$ can augment the matching $S_{\leq i-1}$. However, there exists a subset of edges from $N_0$ with at least one-half of the weight, in which can feasibly augment the matching $S_{\leq i-1}$. Indeed, $N_0$ forms a valid capacitated matching. Every edge in $N_0$ also has its left endpoint not occupied by $S_{\le i-1}$ and its right endpoint only occupied to at most half capacity by $S_{\le i-1}$. Hence, greedily augmenting $S_{\le i-1}$ with edges from $N_0$ vertex-by-vertex captures at least half the weight of $N_0$. We also know that in each round, the \stablematching\ algorithm gives a $\frac{1}{2}$-approximation to maximum weight matching we could augment, even in the capacitated case (see \cite{mestre2006greedy}). Hence $$\EE[S_{i} \mid \mathcal{G}_{h, \cdot}] \ge \frac{1}{4} \cdot \sum_{e \in N_0} p_e$$
which completes the proof.
\end{proof}

\subsection{Proof of Theorem~\ref{thm:k_teams}: A $\nicefrac{1}{2k}$-approximation for Matching Teams of Size $k$}\label{appendix:k_teams}

% In this section, we consider a setting where for some set of agents $V$, each subset $S \subseteq V$ of size at most $k$ has an associated probability $p_S$ of being ``compatible" as a team. Generalizing the previous setting, in each of $T$ rounds, the agents are partitioned into teams so that each agent is in at most one team. We can equivalently think of the problem as constructing round-by-round matchings in a random hypergraph. We use modifications of the Domination and Charging Lemmas to show that the decentralized greedy formation of teams of constant size provides a constant-factor approximation to the optimal online algorithm. Stated formally, we show the following result. 
Consider the matching on hypergraphs problem where each round we match teams of size at most $k$. Our main result is Theorem~\ref{thm:k_teams}, which we restate here.

\begin{theorem}
In a hypergraph where all hyperedges have cardinality at most $k$, the expected reward obtained by \stablematching\ is at least $\nicefrac{1}{2k}$ of the expected reward of \opt. 
\end{theorem}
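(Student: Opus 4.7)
The plan is to imitate the Domination + Charging Lemma decomposition used for uncapacitated matching and for capacitated matching, and show round-by-round that $\EE[O_{\leq t}] \leq 2k \cdot \EE[S_{\leq t}]$. Since any weighted sum of per-round rewards then satisfies the same approximation factor, this gives the $\nicefrac{1}{2k}$-approximation. Here $S_{\le t}$ and $O_{\le t}$ denote, as in earlier sections, the successful hyperedges selected by \stablematching\ and by \opt\ through round $t$, $S_i$ denotes the new successful hyperedges \stablematching\ selects in round $i$, and the ``stable matching'' process in each round is simply the greedy hypermatching in the reduced hypergraph of unmatched agents.

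First I would set up the decomposition $O_{\leq t} = \Aug \sqcup \Adj$, where $\Aug$ is the set of successful hyperedges of \opt\ that are \emph{vertex-disjoint} from every hyperedge in $S_{\leq t}$, and $\Adj$ consists of the hyperedges in $O_{\leq t}$ sharing at least one vertex with some hyperedge of $S_{\leq t}$. For the Charging Lemma, I would argue sample-graph-by-sample-graph: each edge $e \in \Adj$ contains some vertex $v$ that also lies in a (unique) edge $f \in S_{\leq t}$, so charge $e$ to $f$. Since $O_{\leq t}$ is a hypermatching, each vertex of $f$ receives at most one charge, and $|f| \le k$, so $f$ is charged at most $k$ times. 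Averaging over sample graphs gives $\EE[\Adj] \leq k \cdot \EE[S_{\leq t}]$.

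Next I would establish the Generalized Domination Lemma, partitioning $\Aug = \Aug_1 \sqcup \cdots \sqcup \Aug_t$ by the round in which \opt\ first selected each edge, and showing $\EE[\Aug_i] \leq k \cdot \EE[S_i]$ for each $i \le t$. The proof would mirror Lemma~\ref{lemma:dominationlemma} essentially verbatim: condition on the pair of histories $(h, h')$ that the sample graph induces on \stablematching\ and on \opt\ through round $i-1$, so that \opt's newly-chosen hypermatching in round $i$ is deterministic, and let $N$ be the sub-hypermatching of those new edges that are vertex-disjoint from $S_{\leq i-1}$. Writing $N_0 \subseteq N$ for the edges not appearing in either history, edge-independence gives $\EE[\Aug_i \mid \mathcal{G}_{h,h'}] \leq \sum_{e \in N_0} p_e$, while $N_0$ is a feasible candidate augmentation for \stablematching\ in round $i$; since \stablematching\ implements the standard greedy algorithm for maximum-weight hypermatching with edges of size at most $k$, which is well known to be a $\nicefrac{1}{k}$-approximation, we get $\EE[S_i \mid \mathcal{G}_h] \geq \tfrac{1}{k} \sum_{e \in N_0} p_e$. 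Summing the conditional bound over histories yields $\EE[\Aug_i] \leq k \cdot \EE[S_i]$, and hence $\EE[\Aug] \leq k \cdot \EE[S_{\leq t}]$.

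Combining the two lemmas, $\EE[O_{\leq t}] = \EE[\Aug] + \EE[\Adj] \leq 2k \cdot \EE[S_{\leq t}]$, which is the desired bound. The main obstacle I anticipate is simply re-verifying, in the hypergraph setting, the history-based conditioning argument that makes the Domination Lemma rigorous --- in particular, checking that the set $N$ depends only on $(h, h')$ and not on further edge realizations, so that the independence-of-edges step still produces the clean factor $\sum_{e \in N_0} p_e$. Once that is in place, the $k$-uniformity bound on greedy hypermatching plugs directly into the argument, and no factor-revealing LP is needed since the two lemmas already sum to $2k$.
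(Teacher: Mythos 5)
Your proposal is correct and follows essentially the same route as the paper: the same $\Aug \sqcup \Adj$ decomposition, a charging argument giving $\EE[\Adj] \le k \cdot \EE[S_{\le t}]$ because each hyperedge of $S_{\le t}$ has at most $k$ vertices and $O_{\le t}$ is a hypermatching, and a history-conditioned Domination Lemma in which the greedy $\nicefrac{1}{k}$-approximation for max-weight hypermatching replaces the $\nicefrac{1}{2}$ factor from the graph case. The only cosmetic difference is that the paper states its charging bound per block $\Adj_i(S_j)$ before summing, whereas you charge $\Adj$ in aggregate; the resulting bound is identical.
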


\begin{proof}[Proof Sketch] As the proof follows along similar lines to previous proofs, we provide only a sketch. The proof proceeds by showing $\mathbb{E}[O_{\leq t}] \leq 2k \cdot \mathbb{E}[S_{\leq t}]$ for all $t \in [T]$, using modifications of the Charging Lemma and Domination Lemma. We use the same decompositions as in \Cref{sec:model} to define $\{\Aug_i\}$ and $\{\Adj_i(S_j)\}$ in the hypergraph setting. %\my{we didn't explain the decomposition again. The decompositions that we use are exactly the same as what we have in section 2, but in the hypergraph setting (augmentation, adjacency, etc)}
\end{proof}

\begin{lemma}[Hypergraph Charging Lemma] \label{lemma:hypergraph_charging}
For all $j \le t$, 
$\sum_{i=1}^t \EE[\Adj_i(S_j)] \le k \cdot \EE[S_j].$
\end{lemma}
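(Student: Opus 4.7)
The plan is to mimic the proof of the original Charging Lemma (Lemma~\ref{lemma:charging2}) essentially verbatim, with the only change being that each hyperedge in $S_j$ now has up to $k$ vertices instead of exactly $2$. Since the bound holds sample graph by sample graph in the earlier proof, I would fix a sample graph $G$ and show the pointwise inequality $|\cup_i \Adj_i(S_j)| \le k \cdot |S_j|$, then average over sample graphs to get the expectation bound.

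The charging scheme is: for each hyperedge $e \in \cup_i \Adj_i(S_j)$, the definition of $\Adj(S_j)$ says $e$ shares a vertex with some hyperedge $f \in S_j$ (and with no hyperedge in $S_{\le j-1}$). Pick any such vertex $v$ shared by $e$ and $f$, and charge $e$ to $v$. The key observation is that $\cup_i \Adj_i(S_j) \subseteq O_{\le t}$, and $O_{\le t}$ is a valid hypermatching since it is the set of successful hyperedges selected by OPT in a single round. So the hyperedges in $\cup_i \Adj_i(S_j)$ are vertex-disjoint, which means each vertex $v$ receives at most one charge in this scheme. Since each hyperedge $f \in S_j$ contains at most $k$ vertices, the total charge deposited onto $f$ is at most $k$. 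Summing gives $|\cup_i \Adj_i(S_j)| \le k \cdot |S_j|$ for every sample graph, and taking expectations yields the claim.

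The only conceptual point to verify is that $\cup_i \Adj_i(S_j)$ is indeed a valid hypermatching. This follows exactly as in the graph case: the decomposition $\Adj = \sqcup_j \Adj(S_j) = \sqcup_{i,j} \Adj_i(S_j)$ partitions a subset of $O_{\le t}$, so its union over $i$ is contained in $O_{\le t}$, which is a capacity-one hypermatching by definition of a valid round-$t$ matching in the hypergraph setting. I do not expect any genuine obstacle here; the proof is essentially the graph-case proof with the constant $2$ replaced by $k$, since a hyperedge has at most $k$ vertices onto which charges can land. No subtle conditioning on histories is required because, as in the original Charging Lemma, the inequality is established pointwise on every sample graph.
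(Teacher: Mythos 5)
Your proof is correct and follows essentially the same route as the paper's (which is only a sketch stating that each hyperedge in $S_j$ can be charged at most $k$ times because hyperedges have cardinality at most $k$). Your version just makes the mechanism explicit by routing charges through vertices and using that both $\cup_i \Adj_i(S_j) \subseteq O_{\le t}$ and $S_j$ are vertex-disjoint families, which is exactly the intended argument.
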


\begin{proof}[Proof Sketch]
The proof follows from the same argument as in the proof of Lemma \ref{lemma:charging2}. In particular, as hyperedges have cardinality at most $k$, each hyperedge can be charged at most $k$ times. 
\end{proof}   

\begin{lemma}[Hypergraph Domination Lemma]\label{lemma:hypergraph_domination}  Then, for all $i \leq t$, $\EE[\Aug_i] \le k \cdot \EE[S_i]$.
\end{lemma}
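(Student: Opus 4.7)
The plan is to follow the structure of the proof of Lemma~\ref{lemma:dominationlemma} (the original Domination Lemma), replacing the factor of $2$ coming from greedy being a $\nicefrac{1}{2}$-approximation to maximum weight matching in graphs with a factor of $k$ coming from greedy being a $\nicefrac{1}{k}$-approximation to maximum weight matching in $k$-uniform (or more generally, hyperedges of cardinality at most $k$) hypergraphs.

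Concretely, I would first fix $i \le t$, and partition the probability space over sample graphs by the histories they induce. For each possible history $h$ on \stablematching\ over the first $i-1$ rounds, let $\mathcal{G}_h$ denote the sample graphs inducing this history; for each further history $h'$ of $\opt$ over the first $i-1$ rounds let $\mathcal{G}_{h,h'}$ denote the sample graphs in $\mathcal{G}_h$ that additionally induce $h'$ on $\opt$. By the same argument as in Lemma~\ref{lemma:dominationlemma}, it suffices to prove $k \cdot \EE[S_i \mid \mathcal{G}_h] \ge \EE[\Aug_i \mid \mathcal{G}_{h,h'}]$ for every pair $(h,h')$.

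Conditioned on $\mathcal{G}_{h,h'}$, the new hyperedges $N$ that $\opt$ selects in round $i$ which are vertex-disjoint from $S_{\le i-1}$ (and hence could augment it) are determined. Let $N_0 \subseteq N$ be those hyperedges that do not already appear in $h$ or $h'$; edges in $N \setminus N_0$ are guaranteed to be unsuccessful, and for each $e \in N_0$ the conditional probability of being successful is $p_e$ by independence. Since $\Aug_i$ is contained in the set of successful hyperedges of $N_0$, we obtain $\EE[\Aug_i \mid \mathcal{G}_{h,h'}] \le \sum_{e \in N_0} p_e$.

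The key remaining step is to lower-bound $\EE[S_i \mid \mathcal{G}_h]$. Here $N_0$ is a hypergraph matching (disjoint from $h$) that is a feasible augmentation of $S_{\le i-1}$ in round $i$, so \stablematching\ in round $i$ could have chosen any subset of $N_0$. Because \stablematching\ greedily picks hyperedges of maximum probability, and the standard greedy algorithm for hypergraph matching with hyperedges of size at most $k$ is a $\nicefrac{1}{k}$-approximation to maximum weight hypergraph matching, we get $k \cdot \EE[S_i \mid \mathcal{G}_h] \ge \sum_{e \in N_0} p_e$. Combining the two bounds yields the desired inequality. The main (minor) obstacle is recalling and correctly invoking the $\nicefrac{1}{k}$-approximation guarantee of the greedy algorithm on bounded-cardinality hypergraph matchings; once that is in hand, the conditioning and coupling argument is essentially the same as in the graph case.
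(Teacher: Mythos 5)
Your proposal is correct and follows essentially the same route as the paper: the paper's proof of Lemma~\ref{lemma:hypergraph_domination} is exactly the conditioning/coupling argument of Lemma~\ref{lemma:dominationlemma} with the greedy $\nicefrac{1}{2}$-approximation for matchings replaced by the greedy $\nicefrac{1}{k}$-approximation for hypergraph matchings with hyperedges of cardinality at most $k$ (citing \cite{mestre2006greedy}). Your write-up just spells out the details that the paper leaves as ``holds almost verbatim.''
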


\begin{proof}[Proof Sketch]
The proof of \Cref{lemma:dominationlemma} holds almost verbatim. However, where we previously argued that the greedy algorithm gives a $\nicefrac{1}{2}$-approximation for max-weight matching, we now argue that it gives a $\nicefrac{1}{k}$-approximation for max-weight matching in a hypergraph where all hyperedges have cardinality at most $k$, (see, e.g., \cite{mestre2006greedy}).
\end{proof}

Now, using the two previous lemmas, we can bound 
\begin{align*}
\EE[O_{\le t}] &= \sum_{i=1}^t \Big( \EE[\Aug_i] + \EE[\Adj_i] \Big) 
\le \sum_{i=1}^t k\EE[S_i] + \sum_{i=1}^t \EE[\textsc{Adj}_i] 
\leq 2k \cdot \EE[S_{\le t}],
\end{align*}
which completes the proof.
\endproof

\end{document}